\documentclass[letterpaper, 10 pt, conference]{ieeeconf}
\IEEEoverridecommandlockouts
\def\BibTeX{{\rm B\kern-.05em{\sc i\kern-.025em b}\kern-.08em
    T\kern-.1667em\lower.7ex\hbox{E}\kern-.125emX}}

\usepackage{xcolor}
\usepackage{graphicx}
\usepackage{float}
\usepackage{amssymb}
\usepackage{amsmath}
\usepackage{amsthm}
\newtheorem{thm}{Theorem}
\newtheorem{defn}{Definition}

\newtheorem{lem}{Lemma}
\newtheorem{prob}{Problem}

\newtheorem{assm}{Assumption}
\newtheorem{reform}{Reformulation}

\newcommand{\R}{\mathbb{R}}
\newcommand{\N}{\mathbb{N}}
\renewcommand{\P}{\mathbb{P}}
\newcommand{\E}{\mathbb{E}}

\newcommand{\pr}[1]{\mathbb{P}\!\left(#1\right)}
\newcommand{\ex}[1]{\mathbb{E}\!\left[#1\right]}
\newcommand{\hex}[1]{\hat{\mathbb{E}}\!\left[#1\right]}

\newcommand{\var}[1]{\mathrm{Var}\!\left(#1\right)}
\newcommand{\hvar}[1]{\hat{\mathrm{Var}}\!\left(#1\right)}
\newcommand{\std}[1]{\mathrm{Std}\!\left(#1\right)}
\newcommand{\hstd}[1]{\hat{\mathrm{Std}}\!\left(#1\right)}

\newcommand{\bvec}[1]{\vec{\boldsymbol{#1}}}
\newcommand{\Nt}[2]{\mathbb{N}_{[#1,#2]}}

\usepackage[nodisplayskipstretch]{setspace}
\setlength{\textfloatsep}{5pt}


\makeatletter
\let\NAT@parse\undefined
\makeatother
\usepackage{hyperref}
\hypersetup{
    colorlinks=true,      
    linkcolor=black,
    citecolor=black,
    filecolor=black,
    urlcolor=black     
    }

\author{Shawn Priore and Meeko Oishi
    \thanks{
        This material is based upon work supported by the National Science Foundation under NSF Grant Number CMMI-2105631. Any opinions, findings, and conclusions or recommendations expressed in this material are those of the authors and do not necessarily reflect the views of the National Science Foundation.  
        \newline \indent Shawn Priore and Meeko Oishi are with Elec. \& Computer Eng., Univ. of New Mexico, Albq., NM; e-mail: \texttt{shawn.a.priore@gmail.com} (corresponding author) and \texttt{oishi@unm.edu}.
    }
}
\title{Stochastic Optimal Control For Gaussian Disturbances with Unknown Mean and Variance Based on Sample Statistics}

\begin{document}

\maketitle

\begin{abstract}
    We propose an open loop methodology based on sample statistics to solve chance constrained stochastic optimal control problems with probabilistic safety guarantees for linear systems where the additive Gaussian noise has unknown mean and covariance. We consider a joint chance constraint for time-varying polytopic target sets under assumptions that the disturbance has been sufficiently sampled. We derive two theorems that allow us to bound the probability of the state being more than some number of sample standard deviations away from the sample mean. We use these theorems to reformulate the chance constraint into a series of convex and linear constraints. Here, solutions guarantee chance constraint satisfaction. We demonstrate our method on a satellite rendezvous maneuver and provide comparisons with the scenario approach.  
\end{abstract}

\section{Introduction}

Stochastic systems with incomplete disturbance information can arise in many safety critical stochastic systems. For example, autonomous cars can experience disturbances from poor road conditions caused by weather or poor maintenance, varying road materials changing drag coefficients, or gusty winds attempting to blow the vehicle off the road. One method to handle incomplete disturbance information is to use data-driven control, which has proven to be useful for systems in which traditional modeling techniques cannot be used. 
However, data-driven methods that rely on sample data typically cannot guarantee constraint satisfaction as they only provide an approximation of the true disturbance.

The use of data-driven approaches for control has been the topic of extensive literature. The scenario approach \cite{calafiore2006scenario, Campi2008} solves the joint chance constraints by finding a controller that satisfies each of the constraints for all samples in a set of disturbance samples. The scenario approach can be used to derive a confidence bound for the probability of the joint chance constraints being satisfied, but lacks strict assurances of satisfaction \cite{Yang2019}. While the scenario approach allows for a simple interpretation and easy implementation, the main drawback is that the number of samples required is a function of both the probabilistic violation threshold and the confidence bound. When either the probabilistic violation thresholds or the confidence bounds are very strict, the number of samples required can be prohibitively large \cite{KariotoglouSCL2016}. Iterative approaches \cite{Campi2018TAC, care2014fast} been proposed to reduce the computational burden by reducing the number of samples at each iteration and comparing successive solutions. 

Data reduction techniques relying on parameter estimation have been posed to simplify sample-based approaches \cite{Saha2010, Madankan15, RAJAMANI09}. These methods, however, tend to incorporate these estimates as ground truth \cite{Verma10} and can lead to maneuvers that are not safe when implemented. Similarly, robust control mechanisms \cite{Ben-Tal2009, Lam15} have also been used to find solutions by creating bounds based on the extremum of the sample data. Robust bounds can over- or under-approximate the probability of the chance constraint based on how representative the sample data is of the true distribution. Both the parameter estimation techniques and robust control mechanisms that rely on data cannot enforce strict chance constraint satisfaction as they approximate the distribution without regard for the distribution of these estimates.

In this paper we derive a concentration inequality that bounds the tail probability of a random variable based on sample statistics under unimodality conditions. We propose the application of this concentration inequality for the evaluation of polytopic target set chance constraints in the presence of Gaussian disturbances with unknown mean and covariance. 
The main contribution of this paper is a {\em closed-form} reformulation of polytopic target set chance constraints with sample statistics that {\em guarantees chance constraint satisfaction}. Although the derived concentration inequality cannot be applied as broadly as other data-driven methods, it does allow for guarantees of chance constraint satisfaction for LTI systems with Gaussian distributions with unknown mean and covariance. The main drawback of our approach is the derived concentration inequality is known to be conservative for evaluation of Gaussian chance constraints.


The paper is organized as follows. Section \ref{sec:prelim} provides mathematical preliminaries and formulates the problem. Section \ref{sec:methods} derives the sample-based concentration inequalities and reformulates the joint chance constraint into convex constraints. Section \ref{sec:results} demonstrates our approach on satellite rendezvous problems, and we conclude in Section \ref{sec:conclusion}.

\section{Preliminary and Problem Setup} \label{sec:prelim}

We denote random variables with bold, for vectors $\bvec{x} \in \R^n$ and scalars $\boldsymbol{x} \in \R$. For a random variable $\boldsymbol{x}$, we denote the expectation as $\ex{\boldsymbol{x}}$, standard deviation as $\std{\boldsymbol{x}}$, and variance as $\var{\boldsymbol{x}}$. The $i$\textsuperscript{th} sample of $\boldsymbol{x}$ is $\boldsymbol{x}^{[i]}$. Estimates are denoted with $\hat{\cdot}$, i.e., $\hex{\boldsymbol{x}}$ is the sample mean of $\boldsymbol{x}$. We denote the Kronecker product of matrices $A$ and $B$ as $A \otimes B$. The $n \times n$ identity matrix is $I_n$, a $n \times m$ matrix of zeros is $0_{n \times m}$ or $0_{n}$ if $n=m$, and a $n\times 1$ vector of ones is $\vec{1}_n$.

\subsection{Problem Formulation}

Consider a planning context in which the evolution of a vehicle is governed by the discrete-time LTI system,
\begin{equation}
    \bvec{x}(k+1) = A \bvec{x}(k) + B \vec{u}(k) + \bvec{w}(k) \label{eq:system}
\end{equation}
with state $\bvec{x}(k) \in \mathcal{X} \subseteq \R^n$, input $\vec{u}(k) \in \mathcal{U} \subseteq \R^m$,  $\bvec{w}(k) \in \R^n$, and known initial condition $\vec{x}(0)$.  We presume the admissible control set, $\mathcal{U}$, is a convex polytope, and that the system evolves over a finite time horizon of $N \in \N$ steps. We presume the disturbance $\bvec{w}(k)$ follows a Gaussian distribution 
\begin{equation}
    \bvec{w}(k) \sim \mathcal{N}\left( \ex{\bvec{w}(k)}, \var{\bvec{w}(k)} \right)
\end{equation}
with unknown mean $\ex{\bvec{w}(k)}$ and unknown variance-covariance matrix $\var{\bvec{w}(k)}$. 

We write the dynamics at time $k$ 
in concatenated form as
\begin{equation} \label{eq:lin_dynamics}
    \bvec{x}(k) = A^k \vec{x}(0) + \mathcal{C}(k) \vec{U} + \mathcal{D}(k) \bvec{W}
\end{equation}
with 
$\vec{U} =\left[ \vec{u}(0)^\top \ \ldots \ \vec{u}(N-1)^\top \right]^\top \in \mathcal{U}^{N}$, 
$\bvec{W} = \left[ \bvec{w}(0)^\top \ \ldots \ \bvec{w}(N-1)^\top \right]^\top \in \mathbb{R}^{Nn}$,
$\mathcal{C}(k) =  \left[ A^{k-1}B \ \ldots \ AB \ B \ 0_{n \times (N-k)m} \right] \in \R^{n \times Nm}$, and 
$\mathcal{D}(k) = \left[ A^{k-1} \ \ldots \ A \ I_n \ 0_{n \times (N-k)n} \right] \in \R^{n \times Nn}$.

We presume the vehicle has a desired time-varying polytopic target set. This state constraint is considered probabilistically and must hold with desired likelihood. Formally, 
\begin{equation}
    \pr{\cap_{k=1}^{N}\bvec{x}(k)  \in  \mathcal{T}(k)}  \geq  1-\alpha \label{eq:constraint_t}
\end{equation}
We presume convex, compact, and polytopic sets $ \mathcal{T}(k) \subseteq \mathbb R^n$,  and probabilistic violation threshold $\alpha < 1/6$, which is required as a condition for optimality of our method.  

We seek to solve the following optimization problem, with convex performance objective $J: \mathcal{X}^{N} \times \mathcal{U}^{N} \rightarrow \R$,
\begin{subequations}\label{eq:prob_1_opt}
    \begin{align}
        \underset{\vec{U}}{\mathrm{minimize}} \quad & J\left(
        \bvec{X}, \vec{U}\right)  \\
        \mathrm{subject\ to} \quad  & \vec{U} \in \mathcal{U}^N,  \\
        & \text{Dynamics } \eqref{eq:lin_dynamics} \text{ with }
        \vec{x}(0)\\
        & \text{Probabilistic constraint \eqref{eq:constraint_t}} \label{eq:prob_1_opt_constraints} 
    \end{align}
\end{subequations}
where $\bvec{X}$ is the concatenated state vector. 

\begin{defn}[Almost Surely \cite{casella2002}]
    Let $(\Omega, \mathcal{B}(\Omega), \P)$ be a probability space with outcomes $\Omega$, Borel $\sigma$-algebra $\mathcal{B}(\Omega)$, and probability measure $\P$. An event $\mathcal{A} \in \mathcal{B}(\Omega)$ happens almost surely if $\pr{\mathcal{A}} = 1$ or $\pr{\mathcal{A}^{c}}=0$ where $\cdot^{c}$ denotes the complement of the event.
\end{defn}

We introduce several key assumptions to make \eqref{eq:prob_1_opt} a tractable optimization problem. 

\begin{assm} \label{assm:samples}
The concatenated disturbance vector $\bvec{W}$ has been independently sampled $N_s$ times. We denote the sampled values as $\bvec{W}^{[i]}$ for $i \in \Nt{1}{N_s}$
\end{assm}
\begin{assm} \label{assm:n_samples}
The sample size $N_s$ must be sufficiently large such that the reformulations presented in this work are tractable.
\end{assm}
\begin{assm} \label{assm:samples_not_equal}
The concatenated disturbance vector samples $\bvec{W}^{[i]}$ are almost surely not all equal. 
\end{assm}

Assumptions \ref{assm:samples}-\ref{assm:n_samples} are required to compute sample statistics. Assumption \ref{assm:n_samples} guarantees that the theorems developed here can be applied for our reformulations. 
Assumption \ref{assm:samples_not_equal} guarantees that the sample standard deviation is greater than zero, implying the distribution is not degenerate nor deterministic. 


\begin{prob} \label{prob:1}
    Under Assumptions \ref{assm:samples}-\ref{assm:samples_not_equal}, solve the stochastic optimization problem \eqref{eq:prob_1_opt} with probabilistic violation threshold $\alpha$ for an open loop controller $\vec{U} \in  \mathcal{U}^N$.
\end{prob} 

The main challenge in solving Problem \ref{prob:1} is assuring \eqref{eq:prob_1_opt_constraints}. Historically, methods that rely on sample data, such as the scenario approach, cannot guarantee the derived controller satisfy \eqref{eq:prob_1_opt_constraints} \cite{Yang2019}. Without knowledge of the underlying mean and covariance structure, analytic techniques cannot be used to derive reformulations that allow for guarantees. 

\section{Methods} \label{sec:methods}

We solve Problem \ref{prob:1} by reformulating each chance constraint as an affine summation of the random variable's sample mean and sample standard deviation. This form is amenable to the concentration inequality derived in Section \ref{ssec:bounds}, allowing for a convex reformulation with probabilistic guarantees.

\subsection{Establishing A Sample Based Concentration Inequality} \label{ssec:bounds}

Here, we state the the pivotal theorem that allow us to solve Problem \ref{prob:1}. For brevity, the proof is in Appendix \ref{appx:out-sample}. 

\begin{thm} \label{thm:out_sample_vp}
Let $\boldsymbol{x}$ follow some distribution $f$. Let $\boldsymbol{x}^{[1]},\dots, \boldsymbol{x}^{[N_s]}$ be samples drawn independently from the distribution $f$, for some $N_s\geq2$. Let
\begin{subequations} \label{eq:stats}
\begin{align}
    \hex{\boldsymbol{x}} = & \; \frac{1}{N_s} \sum_{i=1}^{N_s} \boldsymbol{x}^{[i]} \\
    \hstd{\boldsymbol{x}} =  & \; \sqrt{\frac{1}{N_s} \sum_{i=1}^{N_s} (\boldsymbol{x}^{[i]} - \hex{\boldsymbol{x}})^2}
\end{align}
\end{subequations}
be the sample mean and sample standard deviation, respectively, with $\hstd{\boldsymbol{x}}>0$ almost surely. Then, if the distribution of the statistic
\begin{equation}
    \frac{\boldsymbol{x}^{[i]}-\hex{\boldsymbol{x}}}{ \hstd{\boldsymbol{x}}}
\end{equation}
is unimodal,
\begin{equation} \label{eq:out_sample_vp}
    \pr{\boldsymbol{x}\!-\!\hex{\boldsymbol{x}} \!\geq\! \lambda \hstd{\boldsymbol{x}}} \!\leq\! \frac{4(\sqrt{N_s\!+\!1} + \lambda)^2}{9 \left(\lambda^2 N_s \!+\! (\sqrt{N_s\!+\!1}\!+\!\lambda)^2\right) }
\end{equation}
for any 
\begin{equation} \label{eq:thm_lambda_restrict}
    \lambda>\frac{\sqrt{5(N_s+1)}}{\sqrt{3N_s}-\sqrt{5}}
\end{equation}
\end{thm}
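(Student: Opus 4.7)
The plan is to reduce the claim to the one-sided Vysochanskij--Petunin (VP) inequality by augmenting the sample with $\boldsymbol{x}$ itself. First, viewing $(\boldsymbol{x},\boldsymbol{x}^{[1]},\dots,\boldsymbol{x}^{[N_s]})$ as $N_s+1$ iid draws from $f$, I would introduce the augmented sample mean $\bar\mu$ and augmented sample standard deviation $\bar\sigma$ (with divisor $N_s+1$), and consider the standardized residual $Z=(\boldsymbol{x}-\bar\mu)/\bar\sigma$. The deterministic identity $\sum_{j=1}^{N_s+1}((\boldsymbol{x}^{[j]}-\bar\mu)/\bar\sigma)^2=N_s+1$ combined with exchangeability of the $N_s+1$ draws immediately yields $\mathbb{E}[Z]=0$ and $\mathbb{E}[Z^2]=1$, and the same exchangeability transfers the hypothesized unimodality to $Z$.

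Second, I would carry out the algebra linking $(\hex{\boldsymbol{x}},\hstd{\boldsymbol{x}})$ to $(\bar\mu,\bar\sigma)$. Setting $\delta=\boldsymbol{x}-\hex{\boldsymbol{x}}$, a direct computation gives the two identities
\[
\boldsymbol{x}-\bar\mu=\tfrac{N_s}{N_s+1}\delta,\qquad \bar\sigma^2=\tfrac{N_s}{N_s+1}\hstd{\boldsymbol{x}}^2+\tfrac{N_s}{(N_s+1)^2}\delta^2.
\]
Applying $\sqrt{a+b}\leq\sqrt{a}+\sqrt{b}$ to the variance identity and then substituting $\hstd{\boldsymbol{x}}\leq\delta/\lambda$ (valid on the event under consideration, which also forces $\delta>0$) yields after simplification that $\{\boldsymbol{x}-\hex{\boldsymbol{x}}\geq\lambda\hstd{\boldsymbol{x}}\}\subseteq\{Z\geq\lambda'\}$ with $\lambda'=\lambda\sqrt{N_s}/(\sqrt{N_s+1}+\lambda)$.

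Third, the one-sided VP inequality applied to the unimodal $Z$ gives $\pr{Z\geq\lambda'}\leq 4/[9((\lambda')^2+1)]$ whenever $\lambda'\geq\sqrt{5/3}$. Substituting the expression for $\lambda'$ and simplifying the denominator reproduces the right-hand side of \eqref{eq:out_sample_vp}, while the threshold $\lambda'\geq\sqrt{5/3}$ rearranges algebraically to \eqref{eq:thm_lambda_restrict}.

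The main obstacle I expect is the unimodality transfer: the hypothesis concerns the $N_s$-sample standardized residual, whereas the proof uses unimodality of $Z$, the $(N_s+1)$-augmented-sample residual. A clean treatment requires either interpreting the hypothesis as a stable property of $f$ that holds at every admissible sample size, or inserting a short argument that unimodality is preserved under adding a single iid draw. A secondary concern is the tightness of the subadditivity step $\sqrt{a+b}\leq\sqrt{a}+\sqrt{b}$; this step is inherently loose but is essential, since it linearizes the variance identity in $\delta$ and is precisely what produces the denominator $\sqrt{N_s+1}+\lambda$ in $\lambda'$ rather than a worse threshold.
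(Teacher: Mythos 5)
Your proposal is correct and follows essentially the same route as the paper: treat $\boldsymbol{x}$ as an $(N_s+1)$-st sample, use the exact conversion identities between the $N_s$- and $(N_s+1)$-sample statistics, apply subadditivity of the square root to obtain the event inclusion with threshold $\lambda' = \lambda\sqrt{N_s}/(\sqrt{N_s+1}+\lambda)$, and invoke the one-sided Vysochanskij--Petunin bound on the augmented standardized residual (which the paper establishes as a separate lemma via the two-sided inequality and a shift $\delta=\lambda^{-1}$). The unimodality-transfer issue you flag is genuine but is equally present in the paper's own argument, which applies its in-sample lemma at sample size $N_s+1$ while the theorem's hypothesis is stated at sample size $N_s$.
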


Theorem \ref{thm:out_sample_vp} provides a bound for deviations of a random variable $\boldsymbol{x}$ from the sample mean of the distribution. For the purpose of generating open loop controllers in an optimization framework, this will allow us to bound chance constraints based on sample statistics of previously collection data.

For brevity, we define $N_s^{\ast} = N_S + 1$ and 
\begin{equation} \label{eq:lambda_func}
    f(\lambda) = \frac{4(\sqrt{N_s^{\ast}} + \lambda)^2}{9 \left(\lambda^2 N_s + (\sqrt{N_s^{\ast}} + \lambda)^2\right)} 
\end{equation}

To address the need for Assumption \ref{assm:n_samples}, we observe that
\begin{equation} \label{eq:f_lim}
    \lim_{\lambda \rightarrow \infty} f(\lambda) = \frac{4}{9 N_s^{\ast}}
\end{equation}
For any probabilistic violation threshold smaller than this value, Theorem \ref{thm:out_sample_vp} will not be sufficiently tight to bound the constraint. Figure \ref{fig:ineq} graphs \eqref{eq:lambda_func} for $N_s$ taking the values 10, 100, 1000, and $\infty$.

\begin{figure}
    \centering
    \includegraphics[width=0.8\columnwidth]{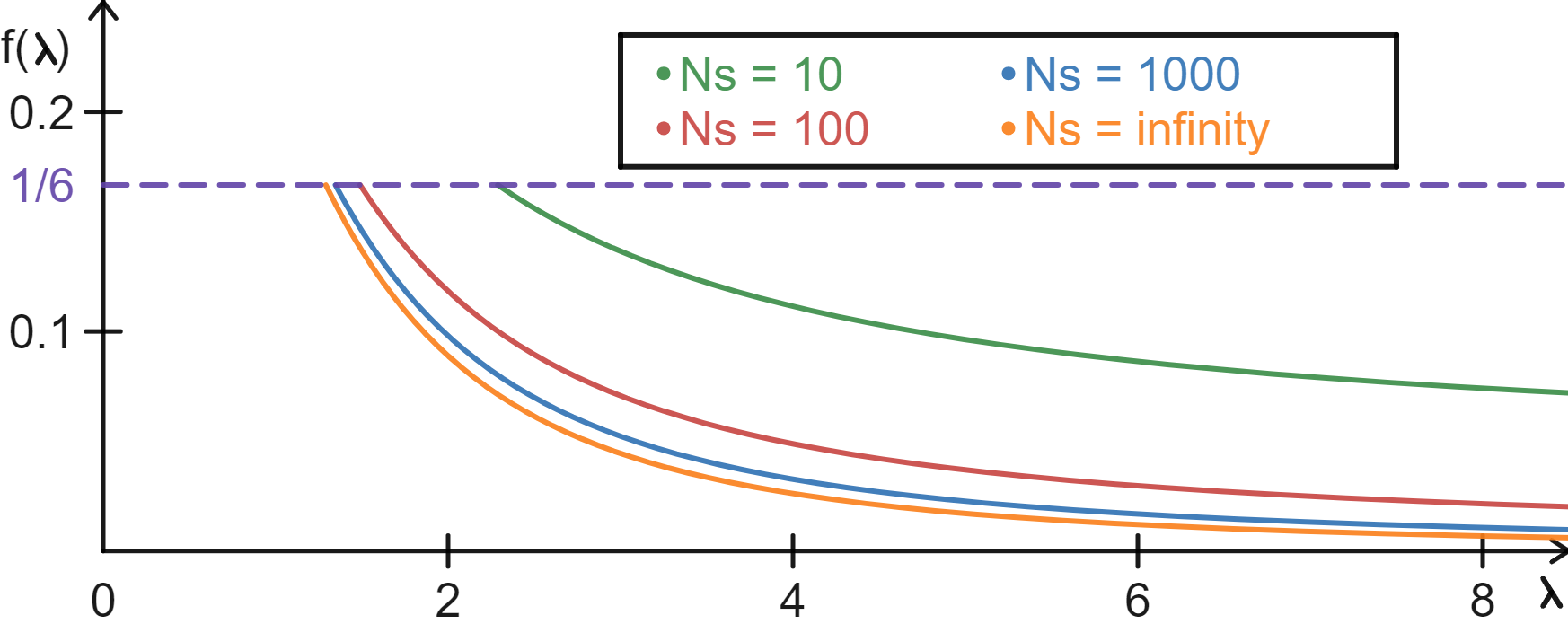}
    \caption{Graph of \eqref{eq:lambda_func} for values of $N_s \in \{10,100,1000,\infty\}$ with the restriction \eqref{eq:thm_lambda_restrict}. }
    \label{fig:ineq}
\end{figure}

Note that we did not use Bessel's correction \cite{casella2002} in the sample variance formula to simplify the probabilistic bound. Accordingly, the sample variance statistic we used is biased in relation to the variance of the distribution. An analogous result can be derived with Bessel's correction, however, the bound becomes more complex.

\subsection{Polytopic Target Set Constraint} \label{ssec:target_reform}

Next, we reformulate \eqref{eq:constraint_t}. First, we write the polytope $\mathcal{T}(k)$ as the intersection of $N_{T_k}$ half-space inequalities,
\begin{equation} \label{eq:polytope}
    \mathcal{T}(k) =   \bigcap_{i=1}^{N_{T_k}} \vec{G}_{ik} \bvec{x}(k) \leq h_{ik}
\end{equation}
where $\vec{G}_{ik} \in \R^n$ and $h_{ik} \in \R$. From \eqref{eq:polytope}, we write \eqref{eq:constraint_t} as
\begin{equation}
    \pr{\bigcap_{k=1}^{N}\boldsymbol{x}(k) \!\in\! \mathcal{T}(k)}\! = \! \pr{ \bigcap_{k=1}^{N}\!\bigcap_{i=1}^{N_{T_k}} \vec{G}_{ik} \bvec{x}(k) \leq h_{ik}}
\end{equation}
By taking the complement and employing Boole's inequality, we separate the combined chance constraints into a series of individual chance constraints,
\begin{equation}
\pr{\bigcup_{k=1}^{N}\boldsymbol{x}(k) \!\not\in\! \mathcal{T}(k)}
\leq  \sum_{k=1}^{N} \sum_{i=1}^{N_{T_k}} \pr{ \vec{G}_{ik} \bvec{x}(k) \geq h_{ik}}
\end{equation}

\noindent Using the approach in \cite{ono2008iterative}, we introduce risk allocation variables $\omega_{ik}$ for each of the individual chance constraints,
\begin{subequations}\label{eq:quantile_reform_new_var}
\begin{align}
     \pr{  \vec{G}_{ik} \bvec{x}(k) \geq h_{ik}} &\leq \omega_{ik} \label{eq:quantile_orig} \\
     {\textstyle \sum_{k=1}^{N}\sum_{i=1}^{N_{T_k}} }  \omega_{ik} &\leq \alpha \label{eq:quantile_reform_new_var_2}\\
     \omega_{ik} & \geq 0 \label{eq:quantile_reform_new_var_3}
\end{align}
\end{subequations}
As the mean and variance of $\bvec{w}(k)$ are unknown, the mean and variance of the random variable $\vec{G}_{ik} \bvec{x}(k)$ are also unknown. Hence, we cannot evaluate \eqref{eq:quantile_orig} directly. 

To facilitate reformulation of \eqref{eq:quantile_reform_new_var}, we introduce 
sample statistics for samples collected per Assumption \ref{assm:samples}.
\begin{subequations} \label{eq:stats_for_w}
\begin{align}
    \hex{\bvec{W}} = & \; \frac{1}{N_s} \sum_{[i]=1}^{N_s} \bvec{W}^{[i]} \\
    \hvar{\bvec{W}} = & \;  \frac{1}{N_s}  \sum_{[i]=1}^{N_s} \left(\bvec{W}^{[i]} \!-\!\hex{\bvec{W}}\right)\left(\bvec{W}^{[i]} \!-\!\hex{\bvec{W}}\right)^{\top}
\end{align}
\end{subequations}
Here, $\hex{\bvec{W}}$ is the sample mean of the disturbance vector $\bvec{W}$ and $\hvar{\bvec{W}}$ is the sample variance-covariance matrix of the disturbance vector $\bvec{W}$. Then each constraints that make up the polytopic target set constraint will have sample mean
\begin{equation} \label{eq:sample_poly_mean}
    \hex{\vec{G}_{ik} \bvec{x}(k)}
     = \vec{G}_{ik}\!  \left(A^k \vec{x}(0) \!+\! \mathcal{C}(k) \vec{U} \!+\! \mathcal{D}(k) \hex{\bvec{W}} \right)
\end{equation} 
and sample standard deviation
\begin{equation} \label{eq:sample_poly_var}
    \hstd{\vec{G}_{ik} \bvec{x}(k)}
     = \sqrt{\vec{G}_{ik} \mathcal{D}(k) \hvar{\bvec{W}} \mathcal{D}^{\top}(k)\vec{G}_{ik}^{\top}}
\end{equation}
for $i \in \Nt{1}{N_{T_k}}$. 

Our goal is to use Theorem \ref{thm:out_sample_vp} to reformulate \eqref{eq:quantile_reform_new_var} into a tractable form. Hence we must show the following items are true,
\begin{enumerate}
    \item The statistic
        \begin{equation} \label{eq:unimodal_constraint}
            \frac{\vec{G}_{ik} \bvec{x}(k)^{[i]} - \hex{\vec{G}_{ik} \bvec{x}(k)}}{\hstd{\vec{G}_{ik} \bvec{x}(k)}}
        \end{equation}
        elicits a unimodal distribution.
    \item Adding the constraint 
        \begin{equation}
            \hex{\vec{G}_{ik} \bvec{x}(k)} + \lambda_{ik} \hstd{\vec{G}_{ik} \bvec{x}(k)} \leq h_{ik}
        \end{equation} 
        and the restriction \eqref{eq:thm_lambda_restrict} allows us to solve \eqref{eq:quantile_reform_new_var} via the variable substitution $\omega_{ik} = f(\lambda_{ik})$. 
\end{enumerate}
We will show these in order.
For brevity we denote
\begin{equation}
     \hat{\mathbb{F}}(\bvec{x}(k), \lambda_{ik}) = \hex{\vec{G}_{ik} \bvec{x}(k)} + \lambda_{ik} \hstd{\vec{G}_{ik} \bvec{x}(k)}
\end{equation}
First, we establish the unimodality of the statistic \eqref{eq:unimodal_constraint}.

\begin{lem} \label{lem:unimodal}
The distribution of \eqref{eq:unimodal_constraint} is unimodal for $N_s \geq 4$.
\end{lem}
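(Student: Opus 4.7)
The plan is to reduce the claim to a classical fact about studentized residuals of iid Gaussian samples. First, I observe that $\boldsymbol{y} := \vec{G}_{ik}\bvec{x}(k)$ is an affine function of the Gaussian disturbance $\bvec{W}$, hence scalar Gaussian, and the samples $\boldsymbol{y}^{[j]} := \vec{G}_{ik}\bvec{x}(k)^{[j]}$ are iid with that distribution. The statistic in \eqref{eq:unimodal_constraint} is invariant under location--scale transformations of $\boldsymbol{y}$, so I may assume $\boldsymbol{y} \sim \mathcal{N}(0,1)$ without loss of generality.

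Next, I would characterize the residual vector $\bvec{R} = (\boldsymbol{y}^{[1]} - \hex{\boldsymbol{y}}, \ldots, \boldsymbol{y}^{[N_s]} - \hex{\boldsymbol{y}})^\top$, which almost surely lies in the $(N_s - 1)$-dimensional hyperplane $H = \{v \in \R^{N_s} : \vec{1}_{N_s}^\top v = 0\}$. Isotropy of $\mathcal{N}(0, I_{N_s})$ implies that the projection of the sample vector onto $H$ is an isotropic Gaussian on $H$, independent of $\hex{\boldsymbol{y}}$; hence $\bvec{R}/\|\bvec{R}\|$ is uniformly distributed on the unit sphere of $H$, which is isometric to $S^{N_s - 2}$.

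Then, using $\hstd{\boldsymbol{y}} = \|\bvec{R}\|/\sqrt{N_s}$ and the elementary identity $\|P_H e_j\|^2 = (N_s - 1)/N_s$, where $P_H$ is the orthogonal projection onto $H$, I can write the studentized residual as $T^{[j]} = \sqrt{N_s - 1}\,V$, with $V$ the inner product between a uniform point on $S^{N_s - 2}$ and a fixed unit vector in $H$. The marginal density of such a projection is the classical expression $\propto (1 - v^2)^{(N_s - 4)/2}$ on $[-1,1]$, so $T^{[j]}$ has density $\propto (1 - t^2/(N_s - 1))^{(N_s - 4)/2}$ on $[-\sqrt{N_s - 1}, \sqrt{N_s - 1}]$.

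For $N_s \geq 4$ the exponent $(N_s - 4)/2$ is nonnegative, so the density is either uniform ($N_s = 4$) or strictly peaked at $t = 0$ and monotonically decreasing in $|t|$ ($N_s \geq 5$); in either case it is (Khintchine-)unimodal, which is the form required by Theorem~\ref{thm:out_sample_vp}. The main obstacle is the density derivation via the uniform-on-sphere argument; the remaining steps are routine. The threshold $N_s \geq 4$ is sharp: for $N_s = 3$ the formula yields $\propto (1 - v^2)^{-1/2}$, an arcsine-type U-shape that is bimodal.
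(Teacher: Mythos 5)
Your proposal is correct, and it reaches the same exact density as the paper but by a different route at both ends of the argument. The paper simply cites the density of the studentized residual, $\phi(z)\propto\left[1-z^2/(N_s-1)\right]^{(N_s-4)/2}$ on $|z|\le\sqrt{N_s-1}$, from a standard reference, and then establishes unimodality by computing $\frac{\partial^2}{\partial z^2}\log\phi(z) = -\frac{(N_s-1+z^2)(N_s-4)}{(N_s-1-z^2)^2}$, concluding log-concavity (hence unimodality) precisely when $N_s\ge 4$. You instead \emph{derive} that density from first principles: the location--scale invariance of the statistic, the isotropy of the projected Gaussian on the hyperplane $H=\{\vec{1}^\top v=0\}$, the resulting uniform direction on $S^{N_s-2}$, and the scaling $T^{[j]}=\sqrt{N_s-1}\,V$ via $\|P_He_j\|^2=(N_s-1)/N_s$ --- all of which check out --- and then conclude unimodality directly from the sign of the exponent, since a symmetric density that is nonincreasing in $|t|$ (constant when $N_s=4$, strictly peaked at $0$ when $N_s\ge5$) is Khinchine-unimodal. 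Your route is more self-contained and arguably more elementary (no second-derivative computation, no appeal to the log-concave-implies-unimodal fact), at the cost of carrying out the sphere-projection derivation; the paper's route is shorter on the page but leans on an external density formula and yields the slightly stronger intermediate fact of log-concavity. Your observation that $N_s=3$ gives an arcsine-type bimodal density is a nice confirmation that the threshold is sharp, which the paper does not remark on. One minor point: like the paper, you implicitly assume the scalar projection $\vec{G}_{ik}\bvec{x}(k)$ is nondegenerate so that the WLOG reduction to $\mathcal{N}(0,1)$ and the normalization $\bvec{R}/\|\bvec{R}\|$ are well defined; this is covered by the paper's Assumption~\ref{assm:samples_not_equal}, but it is worth stating explicitly.
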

\begin{proof}
By the properties of the Gaussian distribution, we can observe that
\begin{equation}
     \vec{G}_{ik} \bvec{x}(k)^{[i]} \sim \mathcal{N} \left( \vec{G}_{ik} \ex{ \bvec{x}(k)}, \vec{G}_{ik} \var{ \bvec{x}(k)} \vec{G}_{ik}^{\top}\right)
\end{equation}
for all $i \in \Nt{1}{N_s}$. Then the random variable
\begin{equation} 
    \boldsymbol{z}^{[i]} = \frac{\vec{G}_{ik} \bvec{x}(k)^{[i]} - \hex{\vec{G}_{ik} \bvec{x}(k)}}{\hstd{\vec{G}_{ik} \bvec{x}(k)}}
\end{equation}
elicits the Lebesgue probability density function (pdf) \cite[eq. 3.1]{Shao2003}
\begin{equation}
\begin{split}
    & \phi_{\boldsymbol{z}^{[i]}}(z) = \\
    & \ \frac{\Gamma(\frac{N_s-1}{2})}{\sqrt{\pi (N_s\!-\!1)} \Gamma(\frac{N_s\!-\!2}{2})} \left[ 1\!-\! \frac{ z^2}{N_s\!-\!1}\right]^{\frac{N_s\!-\!4}{2}} I_{|z| \leq \sqrt{N_s\!-\!1}} 
\end{split}
\end{equation}
for all $[i] \in \Nt{1}{N_s}$ where $\Gamma (\cdot)$ is the gamma function and $I_{cond}$ is an indicator function taking value 1 when the condition $cond$ is true. Observe that the second derivative of the log of the pdf is
\begin{equation} \label{eq:second_deriv}
    \frac{\partial^2}{\partial z^2} \log (\phi_{\boldsymbol{z}^{[i]}}(z))  = -\frac{\left(N_s-1+z^2\right)\left(N_s-4\right)}{\left(N_s-1-z^2\right)^2}
\end{equation}
for $|z| <\sqrt{N_s\!-\!1}$. Here, the sign of \eqref{eq:second_deriv} is determined by the term $N_s-4$. Hence, the pdf is log concave when $N_s \geq 4$. Since log concave distributions are also unimodal \cite{Bertin1997}, we can conclude that \eqref{eq:unimodal_constraint} is unimodal when $N_s \geq 4$. 
\end{proof}

To use Theorem \ref{thm:out_sample_vp}, we must impose the restriction $N_s \geq 4$ per Lemma \ref{lem:unimodal}. This restriction will be in addition to Assumption \ref{assm:n_samples}.

Now we address the second point. We add the constraints
\begin{subequations} \label{eq:new_const}
    \begin{align}
        \hat{\mathbb{F}}(\bvec{x}(k), \lambda_{ik}) & \leq h_{ik} \label{eq:add_target}\\
        \lambda_{ik} & >\frac{\sqrt{5 N_s^{\ast}}}{\sqrt{3N_s}-\sqrt{5}}
    \end{align}
\end{subequations}
to \eqref{eq:quantile_reform_new_var} with risk allocation variables $\lambda_{ik}$. Observe that if \eqref{eq:add_target} holds, then
\begin{equation} \label{eq:enforce_target}
\begin{split}
     \pr{\vec{G}_{ik} \bvec{x}(k) \!\geq\! h_{ik}}  \!\leq\! \pr{\vec{G}_{ik} \bvec{x}(k) \!\geq\! \hat{\mathbb{F}}(\bvec{x}(k), \lambda_{ik})}
\end{split}
\end{equation} 
for any $\lambda_{ik} > 0$. Here, we know that $\hstd{\vec{G}_{ik} \bvec{x}(k) } >0$ almost surely by Assumption \ref{assm:samples_not_equal}. Then by Assumptions \ref{assm:samples}-\ref{assm:samples_not_equal}, the restriction $N_s \geq 4$, Lemma \ref{lem:unimodal}, and Theorem \ref{thm:out_sample_vp}, 
\begin{equation} \label{eq:upperbound}
    \pr{\vec{G}_{ik} \bvec{x}(k) \geq \hat{\mathbb{F}}(\bvec{x}(k), \lambda_{ik})} \leq f(\lambda_{ik})
\end{equation}
Combining \eqref{eq:quantile_reform_new_var} and \eqref{eq:new_const}-\eqref{eq:upperbound}, we get 
\begin{subequations} \label{eq:combined}
\begin{align}
    \pr{  \vec{G}_{ik} \bvec{x}(k) \geq h_{ik}} &\leq \omega_{ik}  \label{eq:combined_1}\\
    \pr{\vec{G}_{ik} \bvec{x}(k) \!\geq\! \hat{\mathbb{F}}(\bvec{x}(k), \lambda_{ik})}  & \geq \pr{  \vec{G}_{ik} \bvec{x}(k) \!\geq\! h_{ik}} \\
    \pr{\vec{G}_{ik} \bvec{x}(k) \!\geq\! \hat{\mathbb{F}}(\bvec{x}(k), \lambda_{ik})} & \leq f(\lambda_{ik}) \label{eq:combined_3}\\
    \hat{\mathbb{F}}(\bvec{x}(k), \lambda_{ik})  & \leq h_{ik} \\
    {\textstyle \sum_{k=1}^{N}\sum_{i=1}^{N_{T_k}} }  \omega_{ik} &\leq \alpha \\
    \omega_{ik} & \geq 0 \label{eq:combined_6}\\
    \lambda_{ik} & >\frac{\sqrt{5 N_s^{\ast}}}{\sqrt{3N_s}-\sqrt{5}}  \label{eq:combined_7}
\end{align}
\end{subequations}
By using the variable substitution $\omega_{ik} = f(\lambda_{ik})$, \eqref{eq:combined} simplifies to

\begin{subequations} \label{eq:combined_small}
\begin{align}
    \pr{  \vec{G}_{ik} \bvec{x}(k) \geq h_{ik}}  & \leq f(\lambda_{ik}) \label{eq:combined_small_1}\\
    \hat{\mathbb{F}}(\bvec{x}(k), \lambda_{ik}) & \leq h_{ik} \label{eq:combined_small_2} \\
    {\textstyle \sum_{k=1}^{N}\sum_{i=1}^{N_{T_k}} }  f(\lambda_{ik}) &\leq \alpha \label{eq:combined_small_3} \\
    \lambda_{ik} & > {\textstyle \frac{\sqrt{5 N_s^{\ast}}}{\sqrt{3N_s}-\sqrt{5}}} \label{eq:combined_small_4}
\end{align}
\end{subequations}
Here, \eqref{eq:combined_1}-\eqref{eq:combined_3} combine into \eqref{eq:combined_small_1}, and \eqref{eq:combined_6} is no longer required as $f(\lambda_{ik}) >0$ by the restriction \eqref{eq:combined_small_4}. Further, since $\alpha < 1/6$, \eqref{eq:combined_small_3} implies \eqref{eq:combined_small_4}, making it a redundant constraint. Similarly, \eqref{eq:combined_small_1} acts as an intermediary condition between \eqref{eq:combined_small_2} and \eqref{eq:combined_small_3}, making it a redundant constraint. Hence, we can further simplify \eqref{eq:combined_small} to 
\begin{subequations}\label{eq:target_constraint}
\begin{align}
    \hat{\mathbb{F}}(\bvec{x}(k), \lambda_{ik}) \leq & \; h_{ik}  \label{eq:target_reform}\\ 
    {\textstyle \sum_{k=1}^{N}\sum_{i=1}^{N_{T_k}} }  f(\lambda_{ik}) \leq & \; \alpha  \label{eq:target_lambda}
\end{align}
\end{subequations}
with \eqref{eq:target_reform} iterated over the index $i \in \Nt{1}{N_{T_k}}$.

\begin{lem} \label{lem:target_satisfy}
For the controller $\vec{U}$, if there exists risk allocation variables $\lambda_{ik}$ satisfying \eqref{eq:target_constraint} for constraints in the form of \eqref{eq:constraint_t}, then $\vec{U}$ satisfies \eqref{eq:prob_1_opt_constraints} almost surely.
\end{lem}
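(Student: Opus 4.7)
The plan is to reverse-engineer the chain of inequalities built in Section \ref{ssec:target_reform}. I would begin by fixing any risk-allocation variables $\{\lambda_{ik}\}$ feasible for \eqref{eq:target_constraint} and conditioning on the probability-one event (via Assumption \ref{assm:samples_not_equal}) on which every sample standard deviation $\hstd{\vec{G}_{ik}\bvec{x}(k)}$ is strictly positive. On this event, \eqref{eq:target_reform} gives $h_{ik} \geq \hat{\mathbb{F}}(\bvec{x}(k),\lambda_{ik})$, and monotonicity of probability in the threshold yields
\begin{equation*}
\pr{\vec{G}_{ik} \bvec{x}(k) \geq h_{ik}} \leq \pr{\vec{G}_{ik} \bvec{x}(k) \geq \hat{\mathbb{F}}(\bvec{x}(k),\lambda_{ik})}.
\end{equation*}

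Next I would verify the hypotheses of Theorem \ref{thm:out_sample_vp} pointwise in $(i,k)$: unimodality is Lemma \ref{lem:unimodal} under $N_s\geq 4$; positivity of the sample standard deviation is in hand; and $\lambda_{ik}$ lies in the admissible range \eqref{eq:thm_lambda_restrict} because \eqref{eq:target_lambda} together with $\alpha<1/6$ forces $f(\lambda_{ik})<1/6$, which per the discussion following \eqref{eq:combined_small} is only possible when $\lambda_{ik}$ exceeds the threshold. Theorem \ref{thm:out_sample_vp} then supplies $\pr{\vec{G}_{ik}\bvec{x}(k)\geq\hat{\mathbb{F}}(\bvec{x}(k),\lambda_{ik})}\leq f(\lambda_{ik})$, and chaining with the previous inequality gives $\pr{\vec{G}_{ik}\bvec{x}(k)\geq h_{ik}}\leq f(\lambda_{ik})$ for each $(i,k)$.

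The concluding step is Boole's inequality, exactly as in the forward derivation:
\begin{equation*}
\pr{\bigcup_{k=1}^{N}\bigcup_{i=1}^{N_{T_k}} \vec{G}_{ik}\bvec{x}(k) \geq h_{ik}} \leq \sum_{k=1}^{N}\sum_{i=1}^{N_{T_k}} f(\lambda_{ik}) \leq \alpha
\end{equation*}
by \eqref{eq:target_lambda}. Taking complements and reassembling the individual half-space conditions via \eqref{eq:polytope} into $\bvec{x}(k)\in\mathcal{T}(k)$ recovers \eqref{eq:constraint_t}, which is \eqref{eq:prob_1_opt_constraints}. The ``almost surely'' qualifier attaches because the derivation was carried out on the probability-one event identified at the outset. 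The main obstacle I anticipate is the side condition that every feasible $\lambda_{ik}$ automatically satisfies \eqref{eq:thm_lambda_restrict}: the paper treats this as a brief consequence of $\alpha<1/6$, but I would want to check carefully that $f$ strictly exceeds $1/6$ on the excluded interval of $\lambda$ values, so that feasibility of \eqref{eq:target_lambda} cannot place any $\lambda_{ik}$ outside the domain on which the concentration bound \eqref{eq:out_sample_vp} is valid.
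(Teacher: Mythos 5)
Your proposal follows essentially the same route as the paper: verify the hypotheses of Theorem \ref{thm:out_sample_vp} for each half-space constraint (positivity of the sample standard deviation from Assumption \ref{assm:samples_not_equal}, unimodality from Lemma \ref{lem:unimodal} with $N_s\geq 4$, and the range condition on $\lambda_{ik}$ from \eqref{eq:target_lambda} with $\alpha<1/6$), then chain the monotonicity step \eqref{eq:enforce_target} with the concentration bound and close via Boole's inequality and De Morgan's law. The one caution you raise is worth acting on: on $\lambda>0$ the function $f$ is strictly decreasing with $f$ equal to $1/6$ exactly at the threshold in \eqref{eq:thm_lambda_restrict}, so there the implication holds; but $f(\lambda)<1/6$ also occurs for sufficiently negative $\lambda$ (e.g.\ $f(-\sqrt{N_s^{\ast}})=0$), so the step ``\eqref{eq:target_lambda} implies \eqref{eq:thm_lambda_restrict}'' silently assumes $\lambda_{ik}>0$, a constraint that should be retained explicitly in \eqref{eq:target_constraint} rather than discarded as redundant.
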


\begin{proof}
By construction, \eqref{eq:sample_poly_mean}-\eqref{eq:sample_poly_var} is equivalent to \eqref{eq:stats} for the random variable $\vec{G}_{ik} \bvec{x}(k)$.  By Assumption \ref{assm:samples_not_equal}, $\hstd{\vec{G}_{ik} \bvec{x}(k) } >0$. Here, $N_s \geq 4$ is stricter than $N_s \geq 2$. The statistic \eqref{eq:unimodal_constraint} is unimodal by Lemma \ref{lem:unimodal}. Satisfaction of \eqref{eq:target_lambda} implies \eqref{eq:thm_lambda_restrict} holds. Hence, all the conditions for Theorem \ref{thm:out_sample_vp} have been met. Theorem \ref{thm:out_sample_vp}, Boole's inequality, and De Morgan's law \cite{casella2002} guarantee that \eqref{eq:prob_1_opt_constraints} is satisfied when \eqref{eq:target_constraint} is satisfied.
\end{proof}

\begin{lem}
The constraint reformulation \eqref{eq:target_constraint} will always be convex in $\vec{U}$ and $\lambda_{ik}$. 
\end{lem}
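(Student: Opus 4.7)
The plan is to treat the two sub-constraints in \eqref{eq:target_constraint} separately. For \eqref{eq:target_reform}, I would observe from \eqref{eq:sample_poly_mean}--\eqref{eq:sample_poly_var} that the sample mean $\hex{\vec{G}_{ik}\bvec{x}(k)}$ is affine in $\vec{U}$, while the sample standard deviation $\hstd{\vec{G}_{ik}\bvec{x}(k)}$ is a constant that depends on neither $\vec{U}$ nor $\lambda_{ik}$. Hence $\hat{\mathbb{F}}(\bvec{x}(k),\lambda_{ik})$ is jointly affine in $(\vec{U},\lambda_{ik})$, so \eqref{eq:target_reform} is a halfspace constraint and therefore convex.

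The second constraint \eqref{eq:target_lambda} depends only on the $\lambda_{ik}$, so I would reduce joint convexity to showing that $f$ defined in \eqref{eq:lambda_func} is convex on the relevant region. Writing $a=\sqrt{N_s^{\ast}}$ and $b=N_s$, direct differentiation shows that $f$ is strictly decreasing on $(0,\infty)$, and the defining equation of $\lambda_0:=\sqrt{5 N_s^{\ast}}/(\sqrt{3N_s}-\sqrt{5})$ yields $f(\lambda_0)=1/6$. Combined with $\alpha<1/6$, the per-term inequality $f(\lambda_{ik})\leq\alpha$ that \eqref{eq:target_lambda} forces on each summand requires every feasible $\lambda_{ik}$ to exceed $\lambda_0$, so it suffices to verify $f''(\lambda)\geq 0$ on $(\lambda_0,\infty)$.

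To do so, I would differentiate $f$ twice and, after clearing positive factors from the quotient rule expression, argue that the sign of $f''(\lambda)$ coincides with that of $h(\lambda):=2\lambda^3+3a\lambda^2-a$. Since $h'(\lambda)=6\lambda(\lambda+a)>0$ on $(0,\infty)$, $h$ is strictly increasing there, so it is enough to prove $h(\lambda_0)\geq 0$. Squaring the definition of $\lambda_0$ gives the identity $3b\lambda_0^2=5(\lambda_0+a)^2$, which together with $a^2=b+1$ lets me recast $h(\lambda_0)\geq 0$ as $5(b+1)(1+\mu_0)^2(2\mu_0+3)\geq 3b$, where $\mu_0:=\lambda_0/a>0$. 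Since $(1+\mu_0)^2\geq 1$ and $2\mu_0+3\geq 3$, the left-hand side is bounded below by $15(b+1)$, which strictly exceeds $3b$, so $f$ is convex on $(\lambda_0,\infty)$ with slack. The main obstacle is this last algebraic step: $f''$ does not factor cleanly in $\lambda$ alone, and one must simultaneously exploit $a^2=b+1$ and the defining relation for $\lambda_0$ to link the cubic sign condition back to the feasibility threshold.
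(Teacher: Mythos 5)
Your proof is correct, and its first half (affineness of $\hat{\mathbb{F}}$ in $(\vec{U},\lambda_{ik})$, reduction of \eqref{eq:target_lambda} to convexity of $f$ on the region that $\alpha<1/6$ forces the $\lambda_{ik}$ into) coincides with the paper's. Where you genuinely diverge is in certifying that every feasible $\lambda_{ik}$ lies where $f''\geq 0$. The paper reads off the inflection condition \eqref{eq:cubic_eq} from $f''$, solves that cubic in closed form via the trigonometric formula to obtain $\Theta(N_s)$ in \eqref{eq:cubic_root}, and then \emph{asserts} the comparison \eqref{eq:lower_bound_lambda} between $\lambda_0=\sqrt{5N_s^{\ast}}/(\sqrt{3N_s}-\sqrt{5})$ and $\Theta(N_s)$ without proof. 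You instead note that $\operatorname{sign}(f'')=\operatorname{sign}(h)$ with $h(\lambda)=2\lambda^{3}+3a\lambda^{2}-a$ (this is \eqref{eq:cubic_eq} scaled by $a$ and is the correct expression; the displayed numerator of $f''$ in the paper is missing a factor of $2$ on the $\lambda^{3}$ term, so your computation actually fixes a typo), that $h$ is strictly increasing on $(0,\infty)$, and then verify $h(\lambda_0)>0$ directly by combining $3N_s\lambda_0^{2}=5(\lambda_0+a)^{2}$ with $a^{2}=N_s+1$, arriving at $15(N_s+1)>3N_s$. This buys two things: it dispenses with the closed-form cubic root entirely, and it supplies an actual proof of \eqref{eq:lower_bound_lambda} — with large slack — where the paper leaves a gap; your observation that $f(\lambda_0)=1/6$ also explains where the hypothesis $\alpha<1/6$ comes from. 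One caveat you share with the paper: concluding $\lambda_{ik}>\lambda_0$ from $f(\lambda_{ik})\leq\alpha<1/6$ via monotonicity implicitly restricts to $\lambda_{ik}>0$ (for $\lambda$ near $-\sqrt{N_s^{\ast}}$ the value of $f$ is also small), i.e., it inherits the domain restriction \eqref{eq:combined_small_4}; this is not a defect specific to your argument.
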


\begin{proof}
By construction \eqref{eq:sample_poly_mean} is affine in the control input and \eqref{eq:sample_poly_var} is constant with respect to the input. By the affine constructions of \eqref{eq:target_reform} and \eqref{eq:sample_poly_mean}, \eqref{eq:target_reform} is affine and hence convex, in $\vec{U}$ and $\lambda_{ik}$. 

Observe the second partial derivative of $f(\lambda)$ with respect to $\lambda$ is
\begin{equation}
    \frac{\partial^2}{\partial \lambda^2}f(\lambda) = \frac{8N_s\left(\lambda^3(N_s^{\ast})^{3/2}+3\lambda^2(N_s^{\ast})^2-(N_s^{\ast})^2\right)}{9\left(N_s\lambda^2+\left(\lambda+\sqrt{N_s^{\ast}}\right)^2\right)^3}    
\end{equation}
Then $f(\lambda)$ has inflection points where 
\begin{equation} \label{eq:cubic_eq}
    \frac{2}{\sqrt{N_s^{\ast}}}\lambda^3+3\lambda^2 -1 = 0
\end{equation}
The function \eqref{eq:cubic_eq} has three real roots with the only positive root being \cite{Zucker2008}
\begin{equation} \label{eq:cubic_root}
    \lambda = \underbrace{\sqrt{N_s^{\ast}}\left[\cos\left(\frac{1}{3}\arccos\left(-\frac{N_s-1}{N_s^{\ast}}\right)\right)-\frac{1}{2}\right]}_{\Theta(N_s)}
\end{equation} 
Further, $\lambda > \Theta(N_s) \Leftrightarrow f''(\lambda)>0$ implying that $f(\lambda)$ is convex in this region. Note that 
\begin{equation} \label{eq:lower_bound_lambda}
    \frac{\sqrt{5N_s^{\ast}}}{\sqrt{3N_s}-\sqrt{5}} \geq \Theta(N_s)
\end{equation}
hold for all values of $N_s$. This implies that $f(\lambda)$ is always convex under the restriction \eqref{eq:target_lambda} as $\alpha<1/6$. Since, \eqref{eq:target_lambda} is the sum of convex functions, it too is convex. Finally, in the problem formulation, we defined the control authority to be a closed and convex set. Therefore, the chance constraint reformulation \eqref{eq:target_constraint} will always be convex.
\end{proof}

We formally write out the final form of the constraint \eqref{eq:target_constraint} by combining \eqref{eq:target_constraint} with the sample mean and sample standard deviation formulas, \eqref{eq:sample_poly_mean} and \eqref{eq:sample_poly_var}, respectively. We do so in \eqref{eq:target_constraint_3}. 
\begin{figure*}
\begin{subequations}\label{eq:target_constraint_3}
\begin{align}
    \vec{G}_{ik}\left( A^k \vec{x}(0) + \mathcal{C}(k) \vec{U} + \mathcal{D}(k) \hex{\bvec{W}} \right) + \lambda_{ik} \sqrt{\vec{G}_{ik}\mathcal{D}(k) \hvar{\bvec{W}} \mathcal{D}^{\top}(k)\vec{G}_{ik}{\top}} \leq & \; h_{ik} & \forall i \in \Nt{1}{N_{T_k}} \label{eq:target_reform_3}\\ 
    {\textstyle \sum_{k=1}^{N} \sum_{i=1}^{N_{T_k}}} f(\lambda_{ik}) \leq & \; \alpha & \label{eq:target_lambda_3}
\end{align}
\end{subequations}
\hrulefill
\end{figure*}

We take a moment to discuss Assumption \ref{assm:n_samples}. From \eqref{eq:f_lim}, we see that \eqref{eq:target_lambda_3} is lower bounded by 
\begin{equation}
    \frac{4N_{T_k}}{9 N_s^{\ast}} \leq {\textstyle \sum_{k=1}^{N}\sum_{i=1}^{N_{T_k}} }  f(\lambda_{ik}) 
\end{equation}
In theory, this means the number of samples need to be 
\begin{equation}
    N_s \geq \frac{4 \sum_{k=1}^{N} N_{T_k}}{9\alpha}-1
\end{equation}
such that there may exist a solution that satisfies \eqref{eq:target_lambda_3}. However, since \eqref{eq:f_lim} is an asymptotic bound, more samples will be required to allow for finite values of $\lambda_{ik}$. In practice, the minimum number of samples needed will be dependent on $\alpha$, the volume of the polytopic region, number of hyperplane constraints, and the magnitude of the variance term.

We formally define the optimization problem that results from this reformulation.
\begin{subequations}\label{eq:reform_1_opt}
    \begin{align}
        \underset{\vec{U}, \lambda_{ik}}{\mathrm{minimize}} \quad & J\left(
        \bvec{X}, \vec{U}\right)  \\
        \mathrm{subject\ to} \quad  & \vec{U} \in \mathcal{U}^N,  \\
        & \text{Sample mean and sample variance-}\\ & \text{covariance matrix of } \bvec{W} \text{ defined by } \eqref{eq:stats_for_w} \nonumber \\
        & \text{Constraint \eqref{eq:target_constraint_3}} 
    \end{align}
\end{subequations}
where $\bvec{X}$ is the concatenated state vector. 

\begin{reform} \label{reform:1}
    Under Assumptions \ref{assm:samples}-\ref{assm:samples_not_equal}, solve the stochastic optimization problem \eqref{eq:reform_1_opt} with probabilistic violation threshold $\alpha$ for an open loop controller $\vec{U} \in  \mathcal U^N$ and optimization parameters $\lambda_{ik}$.
\end{reform} 

\begin{lem} \label{lem:concerv}
Solutions to Reformulation \ref{reform:1} are conservative solutions to Problem \ref{prob:1}.
\end{lem}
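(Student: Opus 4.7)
My plan is to reduce this lemma almost entirely to Lemma \ref{lem:target_satisfy}, since that lemma already provides the feasibility transfer, and then to argue that the containment of feasible sets is strict in general (hence ``conservative'').

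First, I would observe that the constraint system \eqref{eq:target_constraint_3} is just the explicit expansion of \eqref{eq:target_constraint}: substituting the sample mean \eqref{eq:sample_poly_mean} and sample standard deviation \eqref{eq:sample_poly_var} into $\hat{\mathbb{F}}(\bvec{x}(k),\lambda_{ik})$ reproduces \eqref{eq:target_reform_3}, and \eqref{eq:target_lambda_3} is identical to \eqref{eq:target_lambda}. Therefore, any feasible pair $(\vec{U},\{\lambda_{ik}\})$ for Reformulation \ref{reform:1} yields risk allocation variables $\lambda_{ik}$ that satisfy \eqref{eq:target_constraint} for every half-space $i \in \Nt{1}{N_{T_k}}$ and every $k \in \Nt{1}{N}$ describing the target sets $\mathcal{T}(k)$.

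Next, I would invoke Lemma \ref{lem:target_satisfy} directly to conclude that the associated $\vec{U}$ satisfies the probabilistic constraint \eqref{eq:prob_1_opt_constraints} almost surely. Since Reformulation \ref{reform:1} shares the admissibility condition $\vec{U} \in \mathcal{U}^N$, the initial condition $\vec{x}(0)$, and the dynamics \eqref{eq:lin_dynamics} with Problem \ref{prob:1}, the controller $\vec{U}$ is feasible for Problem \ref{prob:1}. Because both problems share the same convex objective $J(\bvec{X},\vec{U})$, which depends only on the state trajectory and the input, the objective value attained by $(\vec{U},\{\lambda_{ik}\})$ in Reformulation \ref{reform:1} equals the objective value attained by $\vec{U}$ in Problem \ref{prob:1}.

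To justify the word \emph{conservative}, I would trace the chain of inequalities used in Section \ref{ssec:target_reform}: Boole's inequality in the passage from the joint violation event to the sum of individual violation probabilities, the pointwise risk allocation via $\omega_{ik}$, and the concentration bound of Theorem \ref{thm:out_sample_vp}. Each of these is a one-sided upper bound, so they only shrink the feasible set; consequently, the feasible set of Reformulation \ref{reform:1}, projected onto $\vec{U}$, is contained in the feasible set of Problem \ref{prob:1}, and the optimal value of Reformulation \ref{reform:1} is an upper bound on the optimal value of Problem \ref{prob:1}. I do not anticipate any real obstacle: the analytic content lives in Theorem \ref{thm:out_sample_vp} and Lemma \ref{lem:target_satisfy}, and what remains is essentially a bookkeeping argument matching the constraints of the two problem statements.
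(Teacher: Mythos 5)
Your proposal is correct and follows essentially the same route as the paper: reduce feasibility transfer to Lemma \ref{lem:target_satisfy}, note that the cost, input constraints, and dynamics are unchanged between the two problems, and conclude that the reformulation's feasible set (projected onto $\vec{U}$) is contained in that of Problem \ref{prob:1}. The only difference is cosmetic: where you attribute the conservatism to the chain of one-sided bounds (Boole's inequality, risk allocation, Theorem \ref{thm:out_sample_vp}), the paper instead grounds it in the fact that Theorem \ref{thm:out_sample_vp} converges asymptotically to the one-sided Vysochanskij--Petunin inequality, which is strictly conservative for Gaussian tails; both observations are valid and your bookkeeping of the feasible-set containment is, if anything, more explicit than the paper's.
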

\begin{proof}
Lemma \ref{lem:target_satisfy} guarantees the probabilistic constraint \eqref{eq:constraint_t} is satisfied. Theorem \ref{thm:out_sample_vp} is asymptotically convergent in $N_s$ to the one-sided Vysochanskij–Petunin inequality \cite{Mercadier2021}. Since the one-sided Vysochanskij–Petunin inequality is always conservative for the Gaussian distribution, so is Theorem \ref{thm:out_sample_vp}. The sample mean and sample standard deviation terms in Reformulation \ref{reform:1} encompass and replace the dynamics used in Problem \ref{prob:1}. The cost and input constraints remain unchanged. 
\end{proof}

Here, Reformulation \ref{reform:1} is a convex optimization problem by Lemma \ref{lem:target_satisfy} and can readily be solved with off-the-shelf convex solvers.

\section{Results} \label{sec:results}

\begin{figure}
    \centering
    \includegraphics[width=0.7\columnwidth]{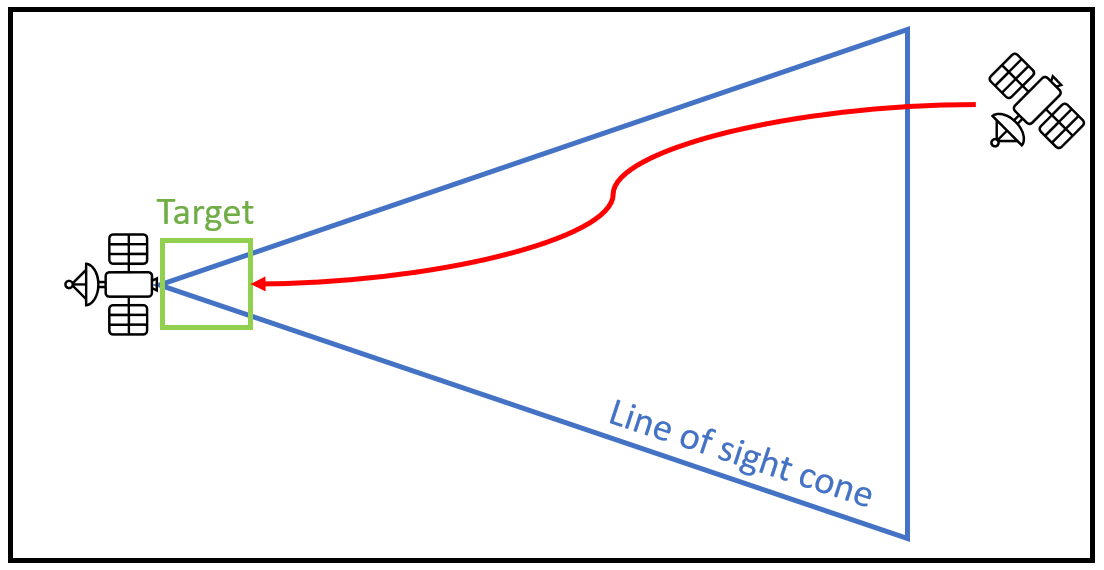}
    \caption{Graphic representation of the problem posed in Section \ref{sec:results}. Here, the dynamics of the deputy is perturbed by additive Gaussian noise with unknown mean and variance. We attempt to find a control sequence that allows the deputy to rendezvous with the chief while meeting probabilistic time varying target set requirements.}
    \label{fig:problem}
\end{figure}

We demonstrate our method on a satellite rendezvous and docking problem with simulated disturbance data. All computations were done on a 1.80GHz i7 processor with 16GB of RAM, using MATLAB, CVX \cite{cvx} and Gurobi \cite{gurobi}. All code is available at \url{https://github.com/unm-hscl/shawnpriore-sample-gaussian}. 

We consider the rendezvous of two satellites, called the deputy and chief. The deputy spacecraft must remain in a predefined line-of-sight cone, and reach a target set that describes docking at the final time step. The relative dynamics are modeled via the Clohessy–Wiltshire equations \cite{wiesel1989_spaceflight}
\begin{subequations}
\begin{align}
\ddot x - 3 \omega^2 x - 2 \omega \dot y &= F_x / m_c \label{eq:cwh:a}\\
\ddot y + 2 \omega \dot x & = F_y / m_c \label{eq:cwh:b}\\
\ddot z + \omega^2 z & = F_z/m_c. \label{eq:cwh:c}
\end{align}   
\label{eq:cwh}
\end{subequations}
with input $\vec{u} = [ \begin{array}{ccc} F_x & F_y & F_z\end{array}]^\top$, orbital rate $\omega = \sqrt{\frac{\mu}{R^3_0}}$, gravitational constant $\mu$, orbital radius $R_0$km, and mass of the deputy $m_c$ \cite{wiesel1989_spaceflight}. We discretize \eqref{eq:cwh} under the assumption of impulse control with sampling time $60$s and insert a disturbance to capture model inaccuracies such that dynamics of the deputy are described by  
\begin{equation}
    \bvec{x}(k+1) = A \bvec{x}(k) + B \vec{u}(k) + \bvec{w}(k) \label{eq:cwh_lin}
\end{equation}
with admissible input set $\mathcal{U} = [-1,1]^3$, and time horizon $N=5$, corresponding to 5 minutes of operation. 

The line-of-sight cone for time steps 1-4 is defined by
\begin{equation}
    G_k = \begin{bmatrix}
        -1 & 0 & 2 & 0 & 0 & 0 \\
        -1 & 2 & 0 & 0 & 0 & 0 \\
        -1 & 0 & -2 & 0 & 0 & 0 \\
        -1 & -2 & 0 & 0 & 0 & 0 \\
        1 & 0 & 0 & 0 & 0 & 0 
    \end{bmatrix} \; 
    \vec{h}_k = \begin{bmatrix}
        0 \\ 0 \\ 0 \\ 0 \\ 10
    \end{bmatrix}
\end{equation}
The terminal set is defined by 
\begin{equation}
    G_N = I_6 \otimes \begin{bmatrix}
        1 \\ -1
    \end{bmatrix} \; 
    \vec{h}_N = \begin{bmatrix}
        2 & 0 &  \vec{1}_{4}^{\top} & 0.1 \cdot \vec{1}_{6}^{\top}
    \end{bmatrix}^{\top}
\end{equation}
We graphically represent the problem of interest in Figure \ref{fig:problem}. The probabilistic violation threshold $\alpha$ is set to 0.05. The performance objective is based on fuel consumption, $J\left(\bvec{X}, \vec{U}\right) = \vec{U}^\top \vec{U}$.

To generate disturbance data, we randomly simulate disturbance vectors from a multivariate Gaussian with parameters
\begin{equation} \label{eq:ex_covar}
\begin{split}
    \ex{\bvec{W}} = & \; \vec{0}_{nN \times 1} \\
    \var{\bvec{W}} = & \; I_N \otimes \begin{bmatrix} 10^{-6} \cdot I_3 & 0_3 \\ 0_3 & 5 \times 10^{-8} \cdot I_3 \end{bmatrix}    
\end{split}
\end{equation}
Note that these values were only used to simulate the disturbance and were not used to evaluate the reformulated constraints \eqref{eq:target_constraint_3}.

\subsection{Comparison With Sample Based Approaches}

For this demonstration, we compare the proposed method with the scenario approach \cite{calafiore2006scenario, Campi2018TAC} and the particle control approach \cite{blackmore2010_particle}, the two most commonly used data-driven approaches. In this comparison, we expect the proposed method to be the most conservative of the three methods, leading to a higher solution cost. However, by reducing each chance constraint to an affine combination of sample statistics, we expect to see significant decrease in computation time in comparison to the other two methods. 

We compare all three methods with the same sample set. Because the scenario approach has the largest sample size requirement, we will use its sample size for all three methods. To determine the number of samples needed for the scenario approach, we use the formula
\begin{equation}
    N_s \geq \frac{2}{\alpha}\left( \ln{\frac{1}{\beta}} + N_o \right)
\end{equation}
where $\beta \in (0,1)$ is the confidence bound and $N_o$ is the number of optimization variables \cite{Campi2008}. We set $\beta = 10^{-8}$ and observe that $N_o = 15$, hence, we use $N_s = 1,337$ samples. 

Figure \ref{fig:compare_method} shows the resulting trajectories of the three methods. We see that the trajectories for the scenario approach and the particle control approach are nearly identical while the proposed method deviates from this trajectory. Here, the proposed method resulted in a trajectory that tended to have larger distances from each of the hyperplane boundaries in comparison the scenario approach solution. This is expected behavior and the embodiment of the conservatism present in our approach.

Solution statistics and empirical chance constraint satisfaction can be found in Table \ref{tab:compare_method}. We see the solution cost was larger for the proposed method, in line with our expectations. To assess constraint satisfaction, we generated $10^5$ additional disturbances and empirically tested whether the constraint was met. We expect the proposed method to always empirically satisfy the chance constraint given Lemma \ref{lem:target_satisfy}. However, this guarantee cannot be made for the scenario approach and the particle control approach. This is apparent in Table \ref{tab:compare_method} as the particle control approach did not empirically satisfy the constraint.

We point out in Table \ref{tab:compare_method} the large difference in computation time between the three methods. Note that the particle control approach was not able to find an optimal solution within a 30 minute time limit. However, despite this fact, the time to solve the proposed solution is \textit{two orders of magnitude faster than the scenario approach and at least four orders of magnitude faster than the particle control approach}. 

\begin{figure}
    \centering
    \includegraphics[width=0.8\columnwidth]{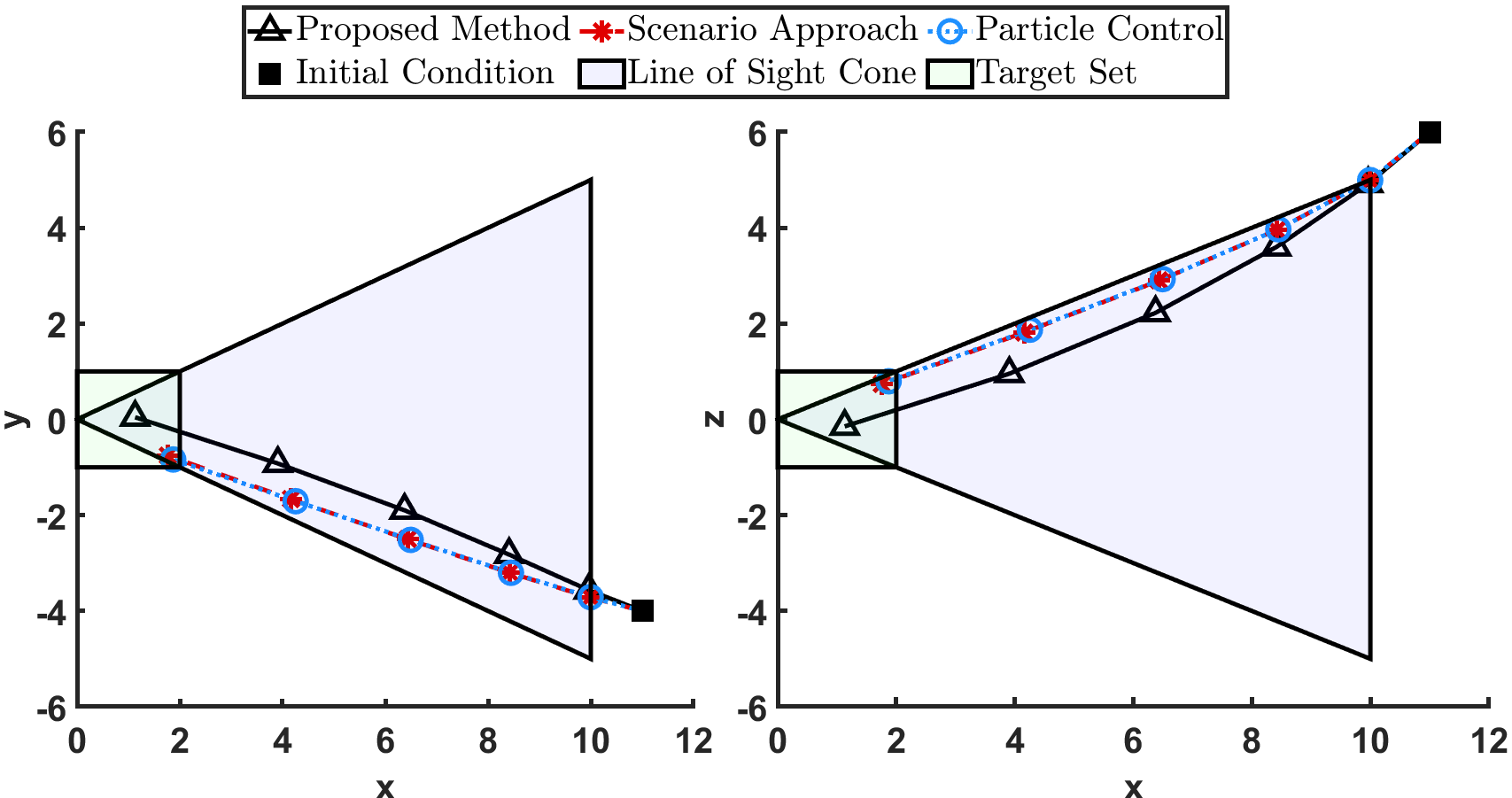}
    \caption{Comparison of mean trajectories between proposed method, scenario approach, and particle control for CWH dynamics. Note that the scenario approach and particle control trajectories are near identical.}
    \label{fig:compare_method}
\end{figure}

\begin{table*}
    \caption{Comparison of computation time, solution cost, and constraint satisfaction between proposed method, scenario approach, and particle control for CWH Dynamics with $\alpha = 0.05$. Chance constraint satisfaction is the ratio of $10^5$ samples satisfying the constraint. * indicates the method could not find an optimal solution within a 30 minute time limit.}
    \centering
    \begin{tabular}{lccc}
        \hline \hline
        Metric & Proposed & Scenario Approach \cite{calafiore2006scenario, Campi2008} & Particle Control \cite{blackmore2010_particle} \\ \hline
        Solve Time (sec) & 0.2569 & 12.2240 & 1800.0000* \\
        Cost ($N^2 \!\times\! 10^{-4}$) & $9.6118$ & $7.7886$ & $7.6691$ \\
        Constraint Satisfaction & 1.0000 & 0.9981 & 0.9432 \\
        \hline
    \end{tabular}
    \label{tab:compare_method}
\end{table*}

\subsection{Comparison With Analytic Counterpart}

We compare the proposed method with \cite{Priore2023_TAC_VP}, a chance constrained stochastic optimal control reformulation based on the one-sided Vysochanskij–Petunin inequality (shortened to OSVPI, as needed) \cite{Mercadier2021}. This approach is effective for and has been demonstrated on systems which have target set constraints represented by unimodal distributions, as is the case with the Gaussian distribution, and can be solved via convex optimization. 

\begin{table}
    \caption{Comparison of computation time, solution cost, and constraint satisfaction between proposed method and an MPC approach using the one-sided Vysochanskij-Petunin inequality (MPC/OSVPI) \cite{Priore2023_TAC_VP} for CWH dynamics with $\alpha = 0.05$. Chance constraint satisfaction is the ratio of $10^5$ samples satisfying the constraint.}
    \centering
    \begin{tabular}{lcc}
        \hline \hline
        Metric                 & Proposed   & MPC/OSVPI \cite{Priore2023_TAC_VP} \\ \hline
        Solve Time (sec)       & 0.2422         &  0.2675  \\ 
        Cost ($N^2\!\times\! 10^{-4}$)  & $ 8.3522 $    & $8.1364 $   \\
        Constraint Satisfaction & 1.0000 & 1.0000  \\
        \hline
    \end{tabular}
    \label{tab:compare_anal}
\end{table}

\begin{figure}
    \centering
    \includegraphics[width=0.8\columnwidth]{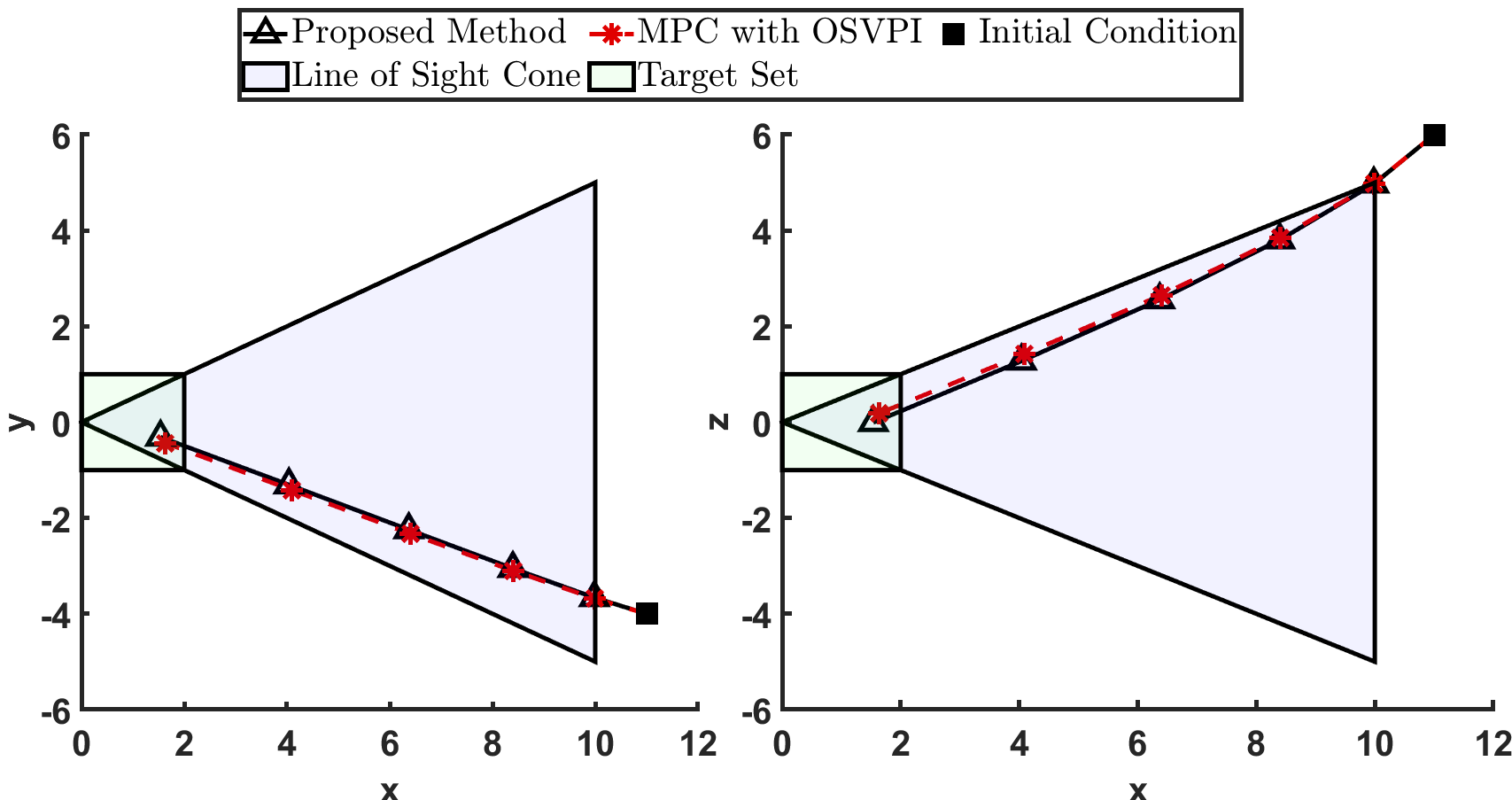}
    \caption{Comparison of trajectories between proposed method and an MPC approach using the one-sided Vysochanskij-Petunin inequality \cite{Priore2023_TAC_VP} for CWH dynamics. Note that the two trajectories are nearly identical.}
    \label{fig:compare_anal}
\end{figure}

\begin{thm}[One-sided Vysochanskij–Petunin Inequality \cite{Mercadier2021}] \label{thm:osvpi}
Let $\boldsymbol{x}$ be a real valued unimodal random variable with finite expectation $\ex{\boldsymbol{x}}$ and finite, non-zero standard deviation $\std{\boldsymbol{x}}$. Then, for $\lambda > \sqrt{5/3}$, $\pr{\boldsymbol{x} - \ex{\boldsymbol{x}}  \geq  \lambda \std{\boldsymbol{x}}} \leq \frac{4}{9(\lambda^2+1)}$.
\end{thm}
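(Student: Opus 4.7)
The plan is to prove this bound by combining Khintchine's representation theorem for unimodal distributions with a moment-problem (Tchebychev--Markov) duality argument. First I would standardize: setting $\boldsymbol{z}=(\boldsymbol{x}-\ex{\boldsymbol{x}})/\std{\boldsymbol{x}}$ reduces the claim to $\pr{\boldsymbol{z}\geq\lambda}\leq 4/(9(\lambda^2+1))$ assuming $\ex{\boldsymbol{z}}=0$, $\ex{\boldsymbol{z}^2}=1$, and $\boldsymbol{z}$ unimodal. This puts the problem in scale-free form and makes it a statement about the worst case over an explicit moment class.

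Second, by Khintchine's theorem, since $\boldsymbol{z}$ is unimodal with some mode $m$, I may write $\boldsymbol{z}\stackrel{d}{=} m+\boldsymbol{U}\boldsymbol{V}$, where $\boldsymbol{U}\sim\mathrm{Uniform}(0,1)$ is independent of a real random variable $\boldsymbol{V}$. Using $\ex{\boldsymbol{U}}=1/2$ and $\ex{\boldsymbol{U}^2}=1/3$ and matching the first two moments of $\boldsymbol{z}$ yields $\ex{\boldsymbol{V}}=-2m$ and $\ex{\boldsymbol{V}^2}=3(1+m^2)$, which in particular forces $|m|\leq\sqrt{3}$. Conditioning on $\boldsymbol{V}$ and integrating out $\boldsymbol{U}$, the one-sided tail becomes
\begin{equation*}
    \pr{\boldsymbol{z}\geq\lambda}=\ex{\left(1-\frac{\lambda-m}{\boldsymbol{V}}\right)\ind{\boldsymbol{V}\geq\lambda-m}}
\end{equation*}
in the regime $\lambda>m$, with the symmetric case $\lambda\leq m$ handled analogously. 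The key simplification is that this expression depends on $\boldsymbol{V}$ only through an explicit one-dimensional functional of its distribution.

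Third, I would reduce to a constrained moment problem: maximize the right-hand side over all distributions of $\boldsymbol{V}$ subject to the two moment equalities, then optimize the resulting quantity over $m\in[-\sqrt{3},\sqrt{3}]$. By standard Tchebychev--Markov duality the inner extremum is attained by a distribution supported on at most two points, so the outer problem becomes an algebraic optimization whose closed-form solution is $4/(9(\lambda^2+1))$. I expect the main obstacle to be carrying out this two-level optimization cleanly: the one-sided case lacks the symmetry that simplifies the classical two-sided Vysochanskij--Petunin proof, so the extremal mode typically lies in the interior of $[-\sqrt{3},\sqrt{3}]$ and must be determined simultaneously with the support points of $\boldsymbol{V}$. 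The threshold $\lambda>\sqrt{5/3}$ should emerge as exactly the range in which the candidate extremizer remains a bona fide probability distribution (masses in $[0,1]$, moment constraints satisfied); below this cutoff the bound $4/(9(\lambda^2+1))$ exceeds $1/6$, the feasibility constraint becomes binding, and the inequality degenerates. I would finish by exhibiting an explicit unimodal distribution---a uniform mixed with a point mass---attaining equality, certifying that the constant $4/9$ is sharp.
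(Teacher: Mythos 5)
The paper does not actually prove this theorem---it is imported from Mercadier and Strobel \cite{Mercadier2021}---but the appendix proves the identical bound for the standardized sample statistic (Lemma~\ref{thm:in_sample_vp}), and that argument is a two-line Cantelli-style shift: for any $\delta>0$, $\pr{\boldsymbol{z}\geq\lambda}\leq\pr{\left|\boldsymbol{z}+\delta\right|\geq\lambda+\delta}$; the \emph{two-sided} Vysochanskij--Petunin inequality applied about the point $-\delta$ gives the bound $\frac{4}{9}\frac{1+\delta^2}{(\lambda+\delta)^2}$ whenever $(\lambda+\delta)^2\geq\frac{8}{3}(1+\delta^2)$; and the choice $\delta=1/\lambda$ yields $\frac{4}{9(\lambda^2+1)}$ with the validity condition collapsing to $\lambda\geq\sqrt{5/3}$. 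Your route is genuinely different: you rebuild the inequality from first principles via Khintchine's representation $\boldsymbol{z}\stackrel{d}{=}m+\boldsymbol{U}\boldsymbol{V}$ and a Chebyshev--Markov moment problem. Your bookkeeping is correct ($\ex{\boldsymbol{V}}=-2m$, $\ex{\boldsymbol{V}^2}=3(1+m^2)$, hence $|m|\leq\sqrt{3}$), and so is the conditional tail formula after integrating out $\boldsymbol{U}$. What this buys is self-containedness and, if completed, a sharpness certificate; what it costs is that the entire content of the theorem is pushed into the two-level optimization over the law of $\boldsymbol{V}$ and the mode $m$, which you assert has closed-form value $4/(9(\lambda^2+1))$ without carrying it out.

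That deferred step is a genuine gap, not a formality: with two moment constraints the extremal laws of $\boldsymbol{V}$ generically have up to \emph{three} support points rather than two, the dual quadratic majorant must be constructed explicitly for the piecewise function $\left(1-\frac{\lambda-m}{v}\right)\ind{v\geq\lambda-m}$, and the optimal $m$ is coupled to the support points of $\boldsymbol{V}$---this joint optimization is exactly the hard part of the original Vysochanskij--Petunin argument, and the one-sided objective removes the symmetry that tames it in the two-sided case. Your claim that $\lambda>\sqrt{5/3}$ is where the extremizer ceases to be a probability measure is a plausible but unverified heuristic (consistent with $4/(9(\lambda^2+1))=1/6$ at the threshold, but not established). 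As written, then, the proposal is a viable strategy rather than a proof; the shift argument above closes the theorem in a few lines by outsourcing the moment problem to the already-known two-sided inequality, and I would recommend it unless you specifically need the sharpness analysis your approach could provide.
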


As mentioned in the proof of Lemma \ref{lem:concerv}, Theorem \ref{thm:out_sample_vp} is asymptotically convergent in $N_s$ to the one-sided Vysochanskij–Petunin inequality. In this demonstration, we show that Theorem \ref{thm:out_sample_vp} does not add significant conservatism in comparison to the one-sided Vysochanskij–Petunin inequality for large but finite sample sizes. Here, we have selected the sample size for the proposed method to be $N_s = 5,000$. For comparison, we use the mean and covariance matrix \eqref{eq:ex_covar} to compute a solution with the method of \cite{Priore2023_TAC_VP}.

Figure \ref{fig:compare_anal} shows the resulting trajectories and Table \ref{tab:compare_anal} compares the time to compute a solution, solution cost, and empirical chance constraint satisfaction with $10^5$ additional samples disturbances. The two methods preformed near identically. We see this in the resulting trajectories and computation time. The only notable difference is that the proposed method resulted in a 2.6\% increase in solution cost. This is a small increase if we consider the proposed method does not require full knowledge of the underlying distribution. Here, we've shown that the proposed method results in only a small deviation for a finite sample size in comparison to that of its asymptotic counterpart as in \cite{Priore2023_TAC_VP}.

\section{Conclusion} \label{sec:conclusion}

We proposed a sample-based framework to solve chance-constrained stochastic optimal control problems with probabilistic guarantees. This work focuses on joint chance constraints for polytopic target sets in Gaussian disturbed LTI systems where the disturbance’s mean and variance are unknown. We derived a concentration inequality that allow us to bound tail probabilities of a random variable being a set number of sample standard deviations away from the sample mean under unimodality conditions. Our approach relies on this derived theorem to reformulate joint chance constraints into a series of inequalities that can be readily solved as a convex optimization problem. We demonstrated our method on a multi-satellite rendezvous problem and compared it with the scenario approach, particle control, and an MPC method based on the one-sided Vysochanskij-Petunin inequality. We showed the proposed method can find a solution in significantly less time than comparable sample based approaches, and results in only a small increase in conservatism despite using sample statistics. 


\appendix

\subsection{Proof of Theorem \ref{thm:out_sample_vp}} \label{appx:out-sample}

To prove Theorem \ref{thm:out_sample_vp}, we first need to state and prove the following Lemma. 

\begin{lem} \label{thm:in_sample_vp}
Let $\boldsymbol{x}^{[1]},\dots, \boldsymbol{x}^{[N_s]}$ be samples drawn i.i.d., for some $N_s\geq2$. Let
\begin{subequations}
\begin{align}
    \hex{\boldsymbol{x}} = & \; \frac{1}{N_s} \sum_{i=1}^{N_s} \boldsymbol{x}^{[i]} \\
    \hstd{\boldsymbol{x}} =  & \; \sqrt{\frac{1}{N_s} \sum_{i=1}^{N_s} (\boldsymbol{x}^{[i]} - \hex{\boldsymbol{x}})^2}
\end{align}
\end{subequations}
be the sample mean and sample standard deviation, respectively, with $\hstd{\boldsymbol{x}}>0$ almost surely. Then, if the distribution of the statistic $\frac{\boldsymbol{x}^{[i]}-\hex{\boldsymbol{x}}}{ \hstd{\boldsymbol{x}}}$ is unimodal,
$\pr{\boldsymbol{x}^{[i]}-\hex{\boldsymbol{x}} \geq \lambda \hstd{\boldsymbol{x}}} \leq \frac{4}{9(\lambda^2+1)}$ for any $\lambda>\sqrt{5/3}$ and $i \in \Nt{1}{N_s}$.
\end{lem}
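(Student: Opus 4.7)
\textbf{Proof plan for Lemma \ref{thm:in_sample_vp}.} The target bound $\tfrac{4}{9(\lambda^2+1)}$ is exactly the one-sided Vysochanskij--Petunin bound stated later as Theorem \ref{thm:osvpi}, applied to a unit-variance, zero-mean unimodal random variable. The plan is therefore to reduce the in-sample statement to Theorem \ref{thm:osvpi} by showing that the studentized residual $\boldsymbol{z}^{[i]} := (\boldsymbol{x}^{[i]} - \hex{\boldsymbol{x}})/\hstd{\boldsymbol{x}}$ has (i) finite and vanishing mean, (ii) unit variance, and (iii) a unimodal distribution (already assumed in the hypothesis). The probabilistic inequality will then drop out of a direct substitution.

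First, I would establish that $\boldsymbol{z}^{[i]}$ is almost surely bounded, so that its first two moments exist. By definition $\sum_{j=1}^{N_s}(\boldsymbol{x}^{[j]}-\hex{\boldsymbol{x}})^2 = N_s \hstd{\boldsymbol{x}}^2$, and a short Lagrange-multiplier argument (or Cauchy--Schwarz against the centering constraint $\sum_j (\boldsymbol{x}^{[j]}-\hex{\boldsymbol{x}})=0$) yields
\begin{equation}
    |\boldsymbol{z}^{[i]}| \;\leq\; \sqrt{N_s-1} \quad\text{a.s.}
\end{equation}
This bound also justifies dividing by $\hstd{\boldsymbol{x}}$, which is positive almost surely by hypothesis.

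Next, I would compute $\ex{\boldsymbol{z}^{[i]}}$ and $\var{\boldsymbol{z}^{[i]}}$ via an exchangeability (symmetry) argument. Because $\boldsymbol{x}^{[1]},\dots,\boldsymbol{x}^{[N_s]}$ are i.i.d., the joint distribution of $(\boldsymbol{x}^{[i]},\hex{\boldsymbol{x}},\hstd{\boldsymbol{x}})$ does not depend on $i$, so $\ex{\boldsymbol{z}^{[i]}}$ and $\ex{(\boldsymbol{z}^{[i]})^2}$ are common to all $i$. Averaging over $i$ inside the expectation gives
\begin{align}
    \ex{\boldsymbol{z}^{[i]}} &= \ex{\tfrac{1}{N_s}\sum\nolimits_{j=1}^{N_s} \tfrac{\boldsymbol{x}^{[j]}-\hex{\boldsymbol{x}}}{\hstd{\boldsymbol{x}}}} = 0, \\
    \ex{(\boldsymbol{z}^{[i]})^2} &= \ex{\tfrac{1}{N_s}\sum\nolimits_{j=1}^{N_s} \tfrac{(\boldsymbol{x}^{[j]}-\hex{\boldsymbol{x}})^2}{\hstd{\boldsymbol{x}}^2}} = \ex{\tfrac{\hstd{\boldsymbol{x}}^2}{\hstd{\boldsymbol{x}}^2}} = 1.
\end{align}
Hence $\boldsymbol{z}^{[i]}$ has mean $0$ and standard deviation $1$, both finite.

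Finally, since the hypothesis asserts the distribution of $\boldsymbol{z}^{[i]}$ is unimodal, Theorem \ref{thm:osvpi} applies to $\boldsymbol{z}^{[i]}$ and gives, for any $\lambda > \sqrt{5/3}$,
\begin{equation}
    \pr{\boldsymbol{z}^{[i]} \geq \lambda} \leq \frac{4}{9(\lambda^2+1)},
\end{equation}
which, after clearing the denominator $\hstd{\boldsymbol{x}}>0$, is exactly the claimed bound on $\pr{\boldsymbol{x}^{[i]}-\hex{\boldsymbol{x}} \geq \lambda \hstd{\boldsymbol{x}}}$. The only subtle point is the symmetry/exchangeability step: one has to be careful not to ``cancel'' the random denominator before averaging, and instead to push the summation inside the expectation. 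That is the main place the argument could go wrong, but once exchangeability is invoked correctly it is essentially a one-line reduction to Theorem \ref{thm:osvpi}.
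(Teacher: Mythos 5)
Your proof is correct, and it follows the same essential strategy as the paper's: studentize, verify that the statistic $\boldsymbol{z}^{[i]}$ has zero mean and unit second moment, and then apply a Vysochanskij--Petunin-type tail bound under the assumed unimodality. The one substantive difference is which form of the inequality you lean on. You invoke the one-sided Vysochanskij--Petunin inequality (Theorem \ref{thm:osvpi}) as a black box, whereas the paper re-derives that one-sided bound inline from the classical two-sided inequality: it bounds $\pr{\boldsymbol{x}^{[i]}-\hex{\boldsymbol{x}} \geq \lambda \hstd{\boldsymbol{x}}}$ by $\pr{\left| \hat{\mathbb{G}}(\boldsymbol{x})+\delta \right| \geq \lambda+\delta}$ for a shift $\delta>0$, applies the two-sided bound to the shifted variable (second moment $1+\delta^2$), and optimizes at $\delta=\lambda^{-1}$, which is also where the threshold $\lambda > \sqrt{5/3}$ emerges. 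Your route is shorter given that Theorem \ref{thm:osvpi} is already stated in the paper as a citable result; the paper's route is self-contained modulo the two-sided inequality and makes the origin of the constant $\sqrt{5/3}$ explicit. On the other hand, you supply two details the paper only asserts: the a.s.\ bound $|\boldsymbol{z}^{[i]}| \leq \sqrt{N_s-1}$ guaranteeing the moments exist, and the exchangeability argument (pushing the average over $j$ inside the expectation, rather than cancelling the random denominator termwise) showing $\ex{\boldsymbol{z}^{[i]}}=0$ and $\ex{(\boldsymbol{z}^{[i]})^2}=1$. Those additions make your version more complete than the paper's at the step it glosses over.
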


\begin{proof}
For brevity we denote
\begin{equation}
    \hat{\mathbb{G}}(\boldsymbol{x}) = \frac{\boldsymbol{x}^{[i]}-\hex{\boldsymbol{x}}}{\hstd{\boldsymbol{x}}}
\end{equation}
Without loss of generality, for $\delta > 0$,
\begin{equation}\label{eq:proof_1}
    \pr{\boldsymbol{x}^{[i]}\!-\!\hex{\boldsymbol{x}} \!\geq\! \lambda \hstd{\boldsymbol{x}}}
    \! \leq\! \pr{\left| \hat{\mathbb{G}}(\boldsymbol{x}) \!+\!\delta \right| \!\geq\! \lambda \!+\! \delta} 
\end{equation}
Here, $\hat{\mathbb{G}}(\boldsymbol{x})$ is unimodal, $\E[\hat{\mathbb{G}}(\boldsymbol{x})]=0$, and $\E[\hat{\mathbb{G}}(\boldsymbol{x})^2]=\mathrm{std}(\hat{\mathbb{G}}(\boldsymbol{x}))=1$. So, by the Vysochanskij-Petunin inequality \cite{Vysochanskij1980} for $\left(\lambda + \delta \right)^2 \geq \frac{8}{3} \E[(\hat{\mathbb{G}}(\boldsymbol{x})+\delta)^2]$, we bound

\begin{subequations}\label{eq:simpex}
\begin{align}
     \pr{\left| \hat{\mathbb{G}}(\boldsymbol{x}) \!+\!\delta \right| \!\geq\! \lambda \!+\! \delta}  & \ \leq  \frac{4}{9} \frac{\E[(\hat{\mathbb{G}}(\boldsymbol{x})+\delta)^2]}{ \left(\lambda + \delta \right)^2}\\
    & \ =  \frac{4}{9} \frac{1 + \delta^2}{ \left(\lambda + \delta \right)^2} \\
    & \ =  \frac{4}{9(\lambda^2  + 1)}
\end{align}
\end{subequations}
as $\lambda^{-1}$ is the optimal value of $\delta$. 

Finally, using the same logic as in \eqref{eq:simpex}, we simplify the condition on $\lambda$ 
\begin{subequations}
\begin{align}
    \left(\lambda + \delta \right)^2 \geq & \; \frac{8}{3} \E[(\hat{\mathbb{G}}(\boldsymbol{x})+\delta)^2]\\
    \left(\lambda + \lambda^{-1} \right)^2 \geq & \;  \frac{8}{3} \left(1+\lambda^{-2}\right) \\
    \lambda \geq & \;  \sqrt{\frac{5}{3}}
\end{align}
\end{subequations}
\end{proof}

With Lemma \ref{thm:in_sample_vp} established, we can prove Theorem \ref{thm:out_sample_vp}.

\begin{proof}[Proof of Theorem \ref{thm:out_sample_vp}]
Let $\hex{\boldsymbol{x}}^{\ast}$ and $\hvar{\boldsymbol{x}}^{\ast}$ denote the sample mean and variance computed with $N_s + 1$ samples. Note that
\begin{subequations}
\begin{align}
    \boldsymbol{x}^{[N_s^{\ast}]} \!-\!\hex{\boldsymbol{x}}^{\ast} & = \; \frac{N_s}{N_s^{\ast}}(\boldsymbol{x}^{[N_s^{\ast}]}-\hex{\boldsymbol{x}}) \\
    \hvar{\boldsymbol{x}}^{\ast}  & = \; \frac{N_s}{N_s^{\ast}} \hvar{\boldsymbol{x}} \!+\! \frac{N_s}{N_s^{\ast2}} (\boldsymbol{x}^{[N_s^{\ast}]}-\hex{\boldsymbol{x}})^2 \label{eq:sample_var_conversion}
\end{align}
\end{subequations}
Then for $\lambda >0$
\begin{subequations}
\begin{align}
    & \pr{\boldsymbol{x}^{[N_s^{\ast}]}-\hex{\boldsymbol{x}} \geq \lambda \hstd{\boldsymbol{x}}} \\
    & \; = \P\left((\sqrt{N_s N_s^{\ast}}+\lambda\sqrt{N_s})(\boldsymbol{x}^{[N_s^{\ast}]}-\hex{\boldsymbol{x}}) \right.\\ 
    & \qquad \quad \left. \geq \lambda \sqrt{N_s N_s^{\ast}}\hstd{\boldsymbol{x}} +  \lambda\sqrt{N_s}(\boldsymbol{x}^{[N_s^{\ast}]}-\hex{\boldsymbol{x}})\right) \nonumber  \\
    & \; \leq \P\left((\sqrt{N_s N_s^{\ast}}+\lambda\sqrt{N_s})(\boldsymbol{x}^{[N_s^{\ast}]}-\hex{\boldsymbol{x}})\right. \label{eq:out_of_sample_triangle}\\
    & \qquad \quad \left.\geq \lambda \sqrt{N_s N_s^{\ast}\hvar{\boldsymbol{x}} + N_s (\boldsymbol{x}^{[N_s^{\ast}]}-\hex{\boldsymbol{x}})^2}\right) \nonumber
\end{align}
where \eqref{eq:out_of_sample_triangle} results from the triangle inequality. Then,
\begin{align}    
    &\P\left((\sqrt{N_s N_s^{\ast}}+\lambda\sqrt{N_s})(\boldsymbol{x}^{[N_s^{\ast}]}-\hex{\boldsymbol{x}})\right.\\
    & \qquad \quad \left.\geq \lambda \sqrt{N_s N_s^{\ast}\hvar{\boldsymbol{x}} + N_s (\boldsymbol{x}^{[N_s^{\ast}]}-\hex{\boldsymbol{x}})^2}\right) \nonumber\\
    & \; = \pr{\boldsymbol{x}^{[N_s^{\ast}]} \!-\!\hex{\boldsymbol{x}} \geq \frac{\lambda N_s^{\ast}}{ \sqrt{N_s N_s^{\ast}}+\lambda\sqrt{N_s}}\hstd{\boldsymbol{x}}^{\ast} } \\
    & \; = \pr{\boldsymbol{x}^{[N_s^{\ast}]}-\hex{\boldsymbol{x}}^{\ast} \geq \frac{\lambda \sqrt{N_s}}{\sqrt{N_s^{\ast}}+\lambda}\hstd{\boldsymbol{x}}^{\ast} } \\
    & \; = \pr{\boldsymbol{x}^{[N_s^{\ast}]}-\hex{\boldsymbol{x}}^{\ast} \geq \kappa \hstd{\boldsymbol{x}}^{\ast}} \label{eq:proof_2_1}
\end{align}
Where $\kappa$ is a simple substitution. Here, $\lambda >\frac{\sqrt{5N_s^{\ast}}}{\sqrt{3N_s}-\sqrt{5}}$ implies $\kappa>\sqrt{5/3}$. So, by Lemma \ref{thm:in_sample_vp},

\begin{align}
    \pr{\boldsymbol{x}^{[N_s^{\ast}]}\!-\!\hex{\boldsymbol{x}}^{\ast} \!\geq\! \kappa \hstd{\boldsymbol{x}}^{\ast}}
    & \; \leq \frac{4}{9(\kappa^2 + 1)} \\
    & \; = f(\lambda) \label{eq:out_of_sample_result}
\end{align}
\end{subequations}
\end{proof}

\bibliographystyle{ieeetr}
\bibliography{main}

\end{document}